\newtheorem*{theorem*}{Theorem}
\newtheorem{theorem}{Theorem}[section]
\newtheorem{claim}[theorem]{Claim}
\newtheorem{definition}[theorem]{Definition}
\newtheorem{observation}[theorem]{Observation}
\newcommand{\B}{\beta}
\newcommand{\C}{\hat{\beta}}
\newcommand{\BC}{\beta\hat{\beta}}
\newcommand{\deltaw}{\delta}
\newcommand{\yuxuan}[1]{}
\newcommand{\hzh}[1]{}
\newcommand{\yk}[1]{}
\newcommand{\gyk}[1]{}
\newcommand{\E}{\mathrm{E}}
\newcommand{\argmax}{\mathop{\arg\max}}
\newcommand{\surp}{Surp}
\newcommand{\win}{\mathrm{G}}
\newcommand{\lose}{\mathrm{S}}
\newcommand{\bel}{B}
\newcommand{\belset}{\mathcal{\bel}}
\title{How Gold to Make the Golden Snitch: Designing the ``Game Changer'' in Esports}
\author[1]{Zhihuan Huang\thanks{Co-first author, equal contribution}$^,$}
\newcommand\CoAuthorMark{\footnotemark[\arabic{footnote}]}
\author[1]{Yuxuan Lu\protect\CoAuthorMark$^,$}
\author[1]{Yongkang Guo}
\author[1]{Yuqing Kong}
\affil[1]{Peking University}
\affil[1]{\texttt{\{zhihuan.huang, yx\_lu, yongkang.guo, yuqing.kong\}@pku.edu.cn}}
\begin{document}

\maketitle

\begin{abstract}

Many battling games utilize a special item (e.g. Roshan in Defense of the Ancients 2 (DOTA 2), Baron Nashor in League of Legends (LOL), Golden Snitch in Quidditch) as a potential ``Game Changer''. The reward of this item can enable the underdog to make a comeback. However, if the reward is excessively high, the whole game may devolve into a chase for the ``Game Changer''. Our research initiates with a Quidditch case study, a fictional sport in Harry Potter series, wherein we architect the Golden Snitch's reward to maximize the audience's surprise. Surprisingly, we discover that for equally competent teams, the optimal Snitch reward is zero. Moreover, we establish that under most circumstances the optimal score aligns with the game's expected duration multiplied by the teams' strength difference. Finally, we explore the correlation between the ``Game Changer's'' reward and audience surprise in Multiplayer Online Battle Arena (MOBA) games including DOTA 2 and LOL, finding that the optimal reward escalates with increasing team strength inequality.
\end{abstract}

\section{Introduction}
\label{sec:intro}
Dating back to antiquity, the excitement of the arena has captivated audiences far and wide. The public nature of victory and defeat in the arena fosters a shared experience between players and audiences alike. Today, thanks to live-streaming technology, digital arenas for traditional sports and eSports have emerged, allowing billions to share the arena experience in real-time, driving a huge amount of revenue~\cite{website:invenglobal}.

Audience engagement in competitive events thrives on fluctuating beliefs about the potential victor. The level of belief fluctuation can be described by an indicator of surprise~\cite{ely2015suspense}. Empirical evidence verifies that people derive entertainment from surprise~\cite{ijcai2021-36,lanceleague,bizzozero2016importance,buraimo2020unscripted,scarf2019outcome}. Hence, lopsided contests, often lack of surprise, may fail to ignite excitement. This is why competition systems tend to match players with comparable skill levels. However, player constraints sometimes make this ideal impossible to realize, result in unbalanced match-ups.

In response to lopsided match-ups, many battling games set a special item as a potential ``Game Changer''. A fictional sport, Quidditch, invented by J.K. Rowling in her iconic \emph{Harry Potter} series, has a ``Game Changer'', the Golden Snitch. In Quidditch, similar to soccer, teams score points (10 each) by throwing a leather ball into the opponent's hoop. Besides, each team has a special player known as the seeker, whose mission is to catch the Golden Snitch, a feat valued 150 points. The game ends once the Golden Snitch is caught, and the team with the higher points wins. Unsurprisingly, the high-stakes chase for the Golden Snitch has always been the most exciting scene in Quidditch.
Two popular Multiplayer Online Battle Arena (MOBA) games, Defense of the Ancients 2 (DOTA 2) and League of Legends (LOL), also incorporate a ``Game Changer''. These games pit two teams of five players against each other, with participants accumulating wealth to equip their characters for battle. Victory depends on destroying the enemy team's main building. In DOTA 2 and LOL, the ``Game Changers'' are respectively Roshan and Baron Nashor. The team that wins ``Game Changer'' will gain substantial advantages. Like Quidditch, these pivotal moments often attracts the audience's attention~\cite{ijcai2021-36}.

It is natural to ask how should the reward for the ``Game Changer'' be structured. Intuitively, it should surpass a normal item in value, but setting it too high risks devolving the entire match-up to a race for the ``Game Changer''. Therefore, a guideline for reward design is crucial. Although Quidditch is a fictional sport, it captures various real-world sports characteristics. Consequently, it serves as an starting point for our theoretical exploration into reward design.

\paragraph{This paper} This paper theoretically and empirically investigates the ``Game Changer'' design. In \Cref{sec:prob}, we model the Quidditch game as a time-homogeneous Markov chain. \Cref{sec:quidditch} employs the surprise indicator~\cite{ely2015suspense} as the optimization goal, deriving the optimal score of Golden Snitch that maximizes the expected surprise. \Cref{sec:moba} generalizes the model to MOBA games, conducts empirical analysis on DOTA 2 and LOL, and computes the optimal rewards for Roshan and Baron Nashor.

\paragraph{Results overview} Here we describe the paper's primary insights. Within the Quidditch context, in each round, the first team has a scoring probability of $p$, while the second team has a $1-p$ chance. The probability that any round concludes the game (due to the Golden Snitch) is denoted as $q$. We formulate a closed-form expression for the expected surprise, yielding the subsequent results:
\begin{itemize}
    \item \textbf{Near-balanced match-up}:
    When $p=0.5$, indicating a balanced match-up, we theoretically show that the optimal score for the Golden Snitch is zero, regardless of the difficulty of the Snitch ($\forall q\in (0,1)$). In a nearly balanced match-up ($p\approx 0.5$), as long as $q$ is not negligible, numerical studies show that the optimal Golden Snitch score remains zero.
    \item \textbf{Unbalanced match-up}: 
    For an unbalanced match-up ($p<0.5-\epsilon$), when $q$ is sufficiently small, the optimal score for the Golden Snitch approximates to $\frac{1}{2q}\left(\frac{1-p}{p}-1\right)$. This corresponds to the product of the expected number of rounds and the difference in team strengths.
    \item \textbf{General case} In general case, though the optimal $x^*$ does not have an analytic solution, we provide a closed-form upper-bound and numerically show that the optimal $x^*$ is either $0$ or close to the upper-bound.
\end{itemize}

Regarding MOBA, we utilize empirical data from LOL and DOTA 2 to calibrate the model parameters. Through numerical studies, we explore the relationship between the optimal reward of the ``Game Changer'' and the strength gap between the two teams. For balanced match-ups, the optimal reward is roughly the wealth income in 2 to 3 rounds. In unbalanced match-ups, larger strength gap leads to greater optimal reward. Therefore, in both Quidditch and MOBA, our findings suggest that a more unbalanced match-up requires a higher optimal reward for ``Game Changer''. For a balanced match-up, the optimal reward is relatively small for MOBA.

\subsection{Related Work}

Starting from Ely et al.~\cite{ely2015suspense} which provide a clear model definition of suspense and surprise, a series of works study the effect of surprise on people's watching experience, including not only traditional sports such as tennis~\cite{bizzozero2016importance}, soccer~\cite{buraimo2020unscripted,lucas2017goaalll} and rugby~\cite{scarf2019outcome}, but also eSports like LOL~\cite{ijcai2021-36,lanceleague}. These empirical results show that the audience's perceived entertainment utility of a game is positively correlated with the amount of surprise. However, none of those works focus on the theoretical design of the reward of ``Game Changer'' in the battling game. Ely et al.~\cite{ely2015suspense} also study how to design the number of rounds to maximize surprise. However, in many battling games, it's easier to add a ``Game Changer'' than change the number of rounds. 

Huang et al.~\cite{huang2021bonus} consider a game with a fixed number of rounds and study how to design the final round's score to maximize surprise. Despite the analogous optimization objective, we consider a more complicated game model with indeterminate rounds and use different techniques. Our techniques can be generalized to games that can be modeled as time-homogeneous Markov chains. Huang et al.~\cite{huang2021bonus} show that their optimal score is approximately ``expected leads'', the points the weaker team needs to come back in the final round. Similarly, our results also show that when the game lasts longer and the strength difference is larger, the optimal score should be higher. 

Several other works~\cite{brams2018making,braverman2008mafia} focus on how to tune the rules to improve the fairness of the game. This work focuses on the overall surprise from the side of the audience and there may be a trade-off between surprise and fairness as Huang et al.~\cite{huang2021bonus} show that a more surprising game may be less balanced for the two players. 

Regarding modeling MOBA games, Yang et al.~\cite{yang2014identifying} use extracted sequences of graphs and patterns to model, Rioult et al.~\cite{rioult2014mining} and Drachen et al.~\cite{drachen2014skill} use historical data of the teams to model. These works mainly focus on using the model to predict the outcome of the competition~\cite{mora2018moba}. On the other hand, we focus on the relationship between the overall surprise and the reward of the ``Game Changer''. Therefore, we provide a more abstract model to describe the game.

\section{Problem Statement: Quidditch}\label{sec:prob}
This section states the formal optimization goal and transition process. We first focus on Quidditch and will extend the setting to general MOBA games in \Cref{sec:moba}. 

We model Quidditch as a multi-round game between two teams, say Gryffindor and Slytherin. In each round, with probability $q$, one team will catch the Golden Snitch (worth $x\in\mathbb{N}$ points), and with probability $1-q$, one team will score $1$ point. The game ends immediately when the Golden Snitch is caught, and the team with the higher score wins the game. To break a tie, the team that catches the Golden Snitch wins if two teams have the same score. We assume that each team's scoring probability across rounds is a constant, i.e., Gryffindor scores with the same probability $p$ in each round. \Cref{fig:trans} shows the transitions when the Snitch's score is 0 and 1.

\subsection{Optimization Goal}

\paragraph{Belief curve}
The belief curve is a random-length sequence of random variables {\normalsize\[ \belset:=(\bel_0,\bel_1,\ldots,\bel_n), \]}
where $\bel_i\in[0,1]$ is the belief for the probability that Gryffindor wins the whole game after round $i$. $\bel_0$ is the initial belief. $n$ is the round in which the Golden Snitch is caught, i.e, the game ends. Note that $n$ is also a random variable. $\bel_n$ is either zero or one since the outcome must be revealed when the game ends.

\begin{definition}[Surprise~\cite{ely2015suspense}] 
Given a belief curve $\belset$, the \emph{amount of surprise generated by round $i$} is defined as $\Delta_\belset^i:=|\bel_i-\bel_{i-1}|$, and the \emph{overall surprise} for a given belief curve is \[\Delta_\belset:= \sum_i \Delta_\belset^i.\]
\end{definition}
\paragraph{Maximizing the expected overall surprise} 
We aim to compute the score of the Golden Snitch $x$ which, in expectation, maximizes the overall surprise. That is, we aim to find the optimal $x^*=\argmax_{x}\ \E[\Delta_\belset(x)]$, where $\Delta_\belset(x)$ is the overall surprise where the score of the Golden Snitch is equal to $x$. 

\subsection{Time-homogeneous Markov Chain}\label{sec:markov}
To calculate the belief and surprise, we restate the process of Quidditch as a Markov chain.

\paragraph{State space}
Define the state space $\mathbb{S}$ and a sequence of random variables $\mathcal{D}^{(i)}$ as follow:
\begin{align*}
\mathbb{S}=&\left\{
\begin{array}{rcl}
\mathbb{Z},&\text{Gryffindor's score minus Slytherin's score}\\
\lose,&\text{Slytherin wins the whole game}\\
\win,&\text{Gryffindor wins the whole game}
\end{array}
\right\}\\
D_i\in& \mathbb{S},\notag\\
\mathcal{D}^{(i)}=&(D_0,D_1,\ldots,D_i),\qquad D_0,\ldots,D_i\in\mathbb{Z}\\
\mathcal{D}=&(D_0,D_1,\ldots,D_n),\qquad D_0,\ldots,D_{n-1}\in \mathbb{Z}, D_n\in\{\lose,\win\}
\end{align*}

$D_i$ is the state in round $i$, $\mathcal{D}^{(i)}$ is the history of first $i$ rounds and $\mathcal{D}$ is the process of the whole game. Random variable $n$ is the total number of rounds. Since we assume the outcome of these rounds are independent, the value of $D_i$ only depends on $D_{i-1}$. Thus, $\mathcal{D}$ is a Markov Chain.

\begin{figure}[t]\centering
\begin{minipage}{0.3\linewidth}\centering
    \includegraphics[width=\linewidth]{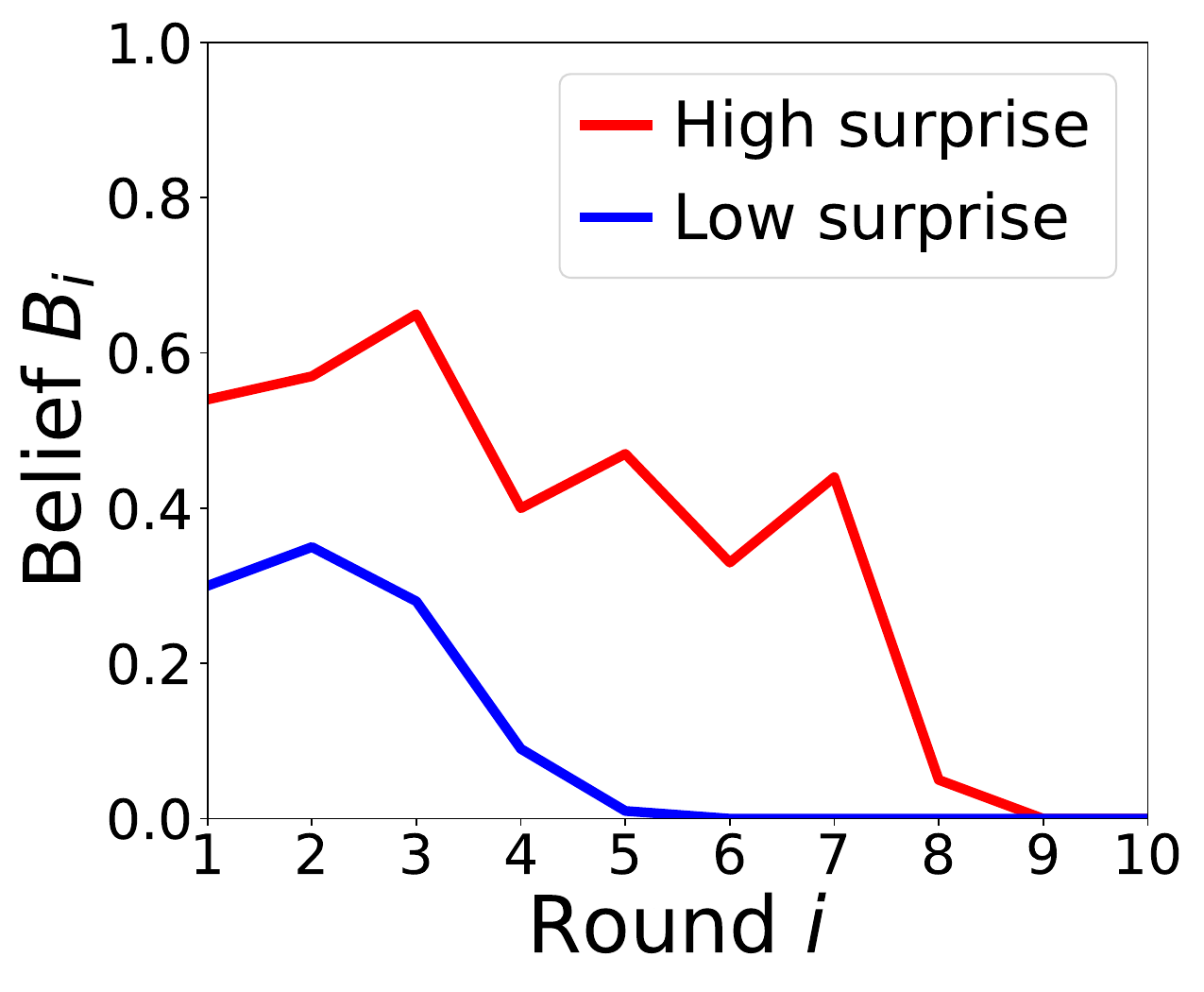}
    \caption{\textbf{Belief curves with low/high overall surprise}}
    \label{fig:examples}
\end{minipage}
\hfill
\begin{minipage}{0.65\linewidth}\centering
    \includegraphics[width=.8\linewidth]{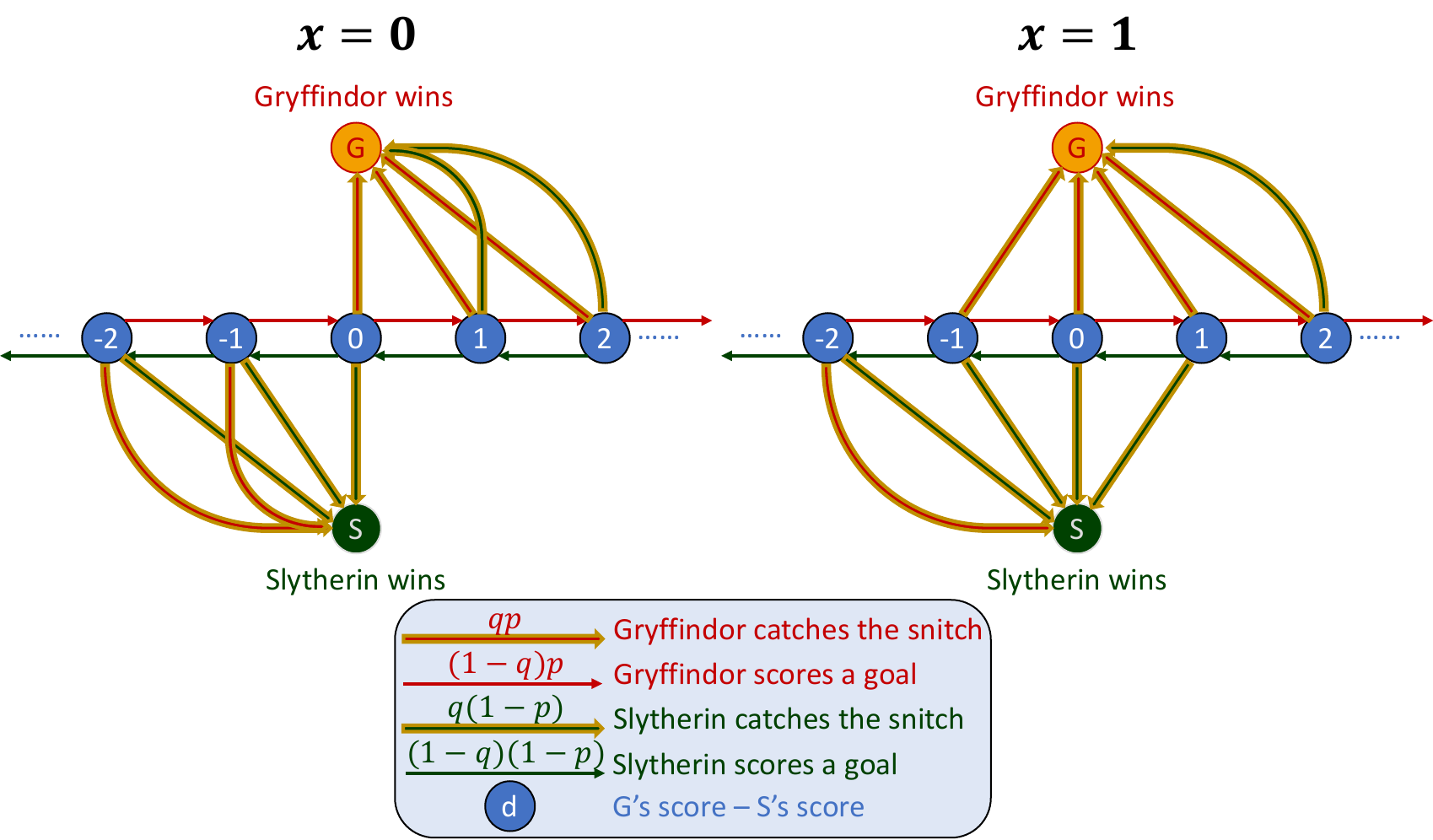}
    \caption{\textbf{Transitions when the Snitch's score is $0$ and $1$}}
    \label{fig:trans}
\end{minipage}
\end{figure}

\paragraph{Transitions} Then we can write the transitions of the Markov chain $\mathcal{D}$: for all $i\geq 0$,
\begin{align}
&\Pr[D_{i+1}=\delta+1|D_i=\delta]=(1-q)\times p\tag{G scores a point},\\
&\Pr[D_{i+1}=\delta-1|D_i=\delta]=(1-q)\times (1-p)\tag{S scores a point},\\
&\Pr[D_{i+1}=\win|D_i=\delta]=q\times 
\begin{cases}
0,&\delta<-x\\
p,&-x\leq \delta\leq x\\
1,&\delta>x\\
\end{cases}\label{fo:trans1},\\
&\Pr[D_{i+1}=\lose|D_i=\delta]=q\times 
\begin{cases}
1,&\delta<-x\\
1-p,&-x\leq \delta\leq x\\
0,&\delta>x\\
\end{cases}\label{fo:trans2},
\end{align}
where formulas~\eqref{fo:trans1}\eqref{fo:trans2} indicate the transitions that occur when a team catches the Golden Snitch. When Gryffindor trails by more than $x$ points, i.e., $\delta<-x$, Slytherin will win no matter who catches the Golden Snitch. When the score difference is between $-x$ and $x$, the team who catches the snitch wins the game. Other cases are similar. 

The probabilities of the transitions are independent of $i$, so $\mathcal{D}$ is a time-homogeneous Markov chain~\cite{serfozo2009basics}. Next, we can define the belief curve $\belset$ formally based on the time-homogeneous Markov chain $\mathcal{D}$.

We define $O$ as the outcome of the whole game, that is
\[\normalsize
O=
\begin{cases}
\lose,&\text{Slytherin wins the whole game}\\
\win,&\text{Gryffindor wins the whole game}
\end{cases}.
\]
In this setting, $\bel_i=\Pr[O=\win|\mathcal{D}^{(i)}]$, i.e., $\bel_i$ is the conditional probability that Gryffindor wins the whole game. Since $\mathcal{D}$ is a Time-homogeneous Markov chain, we immediately obtain the following observation. 

\begin{observation}
The belief only depends on the current score difference between two teams, i.e., there exists a sequence $b_{\delta},\delta\in(-\infty,\infty)$ such that for all history $\mathcal{D}^{(i)}=(D_0,D_1,\ldots,D_i)$, $B_i=\Pr[O=\win|\mathcal{D}^{(i)}=(D_0,D_1,\ldots,D_i)]=b_{D_i}$.
\end{observation}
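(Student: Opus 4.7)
The plan is to prove the observation in two steps: first reduce the conditioning from the full history to just the current state via the Markov property, and then use time-homogeneity to strip the dependence on the round index $i$.

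For the first step, I would write
\[
B_i = \Pr[O=\win \mid D_0, D_1, \ldots, D_i],
\]
and note that the outcome $O$ is determined by the portion of the chain from round $i$ onward (since $O$ is the terminal state $D_n \in \{\win,\lose\}$ and $n \geq i$ on the event that the game is still ongoing at round $i$; if the game has already ended, then $D_i \in \{\win,\lose\}$ equals $O$ and the claim is immediate with $b_\win=1$, $b_\lose=0$). Because $\mathcal{D}$ is a Markov chain, the conditional distribution of $(D_{i+1}, D_{i+2}, \ldots)$ given the history $(D_0,\ldots,D_i)$ depends only on $D_i$. Consequently $\Pr[O=\win \mid \mathcal{D}^{(i)}] = \Pr[O=\win \mid D_i]$, and this quantity is a function of the value taken by $D_i$ alone.

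For the second step, I would invoke the transition probabilities displayed in equations~\eqref{fo:trans1}\eqref{fo:trans2} and the two preceding unnumbered displays, observing that each transition probability is a function of $\delta$ alone and carries no dependence on $i$. Therefore, conditional on $D_i = \delta$, the post-$i$ process $(D_i, D_{i+1}, \ldots)$ has the same law as the process $(D_0, D_1, \ldots)$ started from $D_0 = \delta$. Since $O$ is a measurable function of this future trajectory (it is the terminal element of the chain, which is well defined because the game terminates almost surely: each round ends the game with probability $q>0$, so $n$ is geometrically bounded and almost surely finite), the probability $\Pr[O=\win \mid D_i = \delta]$ equals $\Pr[O=\win \mid D_0 = \delta]$. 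Defining
\[
b_\delta := \Pr[O=\win \mid D_0 = \delta] \qquad \text{for } \delta \in \mathbb{S},
\]
yields $B_i = b_{D_i}$ as claimed.

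The only subtle point, and the place I would be most careful, is handling the random stopping time $n$ cleanly: the outcome $O$ is defined via the terminal state, so one must check that $O$ is indeed a measurable function of the future trajectory from time $i$ and that termination occurs almost surely. Both follow easily from $q \in (0,1)$ (geometric tail on $n$) together with the absorbing nature of the states $\win$ and $\lose$. Once those details are dispatched, the Markov property plus time-homogeneity give the observation directly.
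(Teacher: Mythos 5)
Your argument is correct and follows the same route the paper takes: the paper simply asserts that the observation follows immediately from $\mathcal{D}$ being a time-homogeneous Markov chain, and your two steps (Markov property to reduce the conditioning to $D_i$, then time-homogeneity to remove the dependence on $i$) are exactly the elaboration of that claim. Your attention to almost-sure termination via $q>0$ and the absorbing states is a welcome extra detail the paper leaves implicit.
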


\paragraph{The recurrence relationship of belief}
Based on the transition probabilities of Markov chain $\mathcal{D}$, the recurrence relationship of belief follows that
\begin{align}
b_\delta=&
\left(1-q\right)\left(pb_{\delta+1}+\left(1-p\right)b_{\delta-1}\right)\notag+q\times 
\begin{cases}
0, & {\delta<-x}\\
p, & {-x \leq \delta \leq x}\\
1, & {\delta > x}\\
\end{cases}.
\end{align}

\subsection{Method Overview}
In this section, we briefly introduce our method to solve the optimization problem.
\paragraph{Step 1 (Solving recurrence relationship of belief)} First, we transform the recurrence relationship into the format of matrix multiplication. Then we use eigendecomposition to find the power of the matrices and get the expression of $b_\delta$ about $b_0, b_1$. Note that when the absolute difference $|\delta|$ goes to infinity, the belief goes to $1$ or $0$, i.e., $\lim_{\delta\rightarrow\infty}b_\delta=1,\lim_{\delta\rightarrow\infty}b_\delta=0$. The limit cases imply the values of $b_0, b_1$. By substituting $b_0,b_1$ into the expression of $b_\delta$, we finally get the general form.

\paragraph{Step 2 (Belief $\rightarrow$ Surprise)} Define $s_\delta$ as a single round expected surprise when the difference is $\delta$. Let $v_\delta$ be the expected number of visits to the state $D_i=\delta$ in the Markov chain $\mathcal{D}$. Then, the expected overall surprise is \[\normalsize\E[\Delta_\belset(x)]=\sum_{\delta}s_\delta\cdot v_\delta.\] For convenience, we calculate the surprise generated by catching Snitch ($s^{\text{final}}_\delta$) and scoring a goal $(s^{\text{non-final}}_\delta)$ separately:
\begin{align*}
s_\delta =& s^{\text{final}}_\delta + s^{\text{non-final}}_\delta\\
s^{\text{non-final}}_\delta=&(1-q)\big(p(b_{\delta+1}-b_\delta)+(1-p)(b_\delta-b_{\delta-1})\big)\\
s^{\text{final}}_\delta=& q\times
\begin{cases}
1-b_\delta, & \delta\geq x\\
p(1-b_\delta) + (1-p)b_\delta, & -x\leq \delta\leq x\\
b_\delta, & \delta\leq -x
\end{cases}.
\end{align*}
Based on the transitions of Markov chain $\mathcal{D}$, the recurrence relationship of $v_\delta$ is 
{\normalsize\begin{align*}
v_\delta=
\left(1-q\right)\left(pv_{\delta-1}+(1-p)v_{\delta+1}\right)+q\times 
\begin{cases}
0, & {\delta\neq 0}\\
1, & {\delta=0}
\end{cases}.
\end{align*}}

By transforming the recurrence to matrix power, we can use eigendecomposition to obtain the general formula. By substituting the formulas of $s_\delta$ and $v_\delta$ into the overall expected surprise, we obtain a format of the summation of a geometric sequence, which leads to a closed-form formula for the overall surprise. 

\paragraph{Step 3 (Optimal score)} We define $\surp(x)=\E[\Delta_\belset(x)],x\geq 0$ as the continuous generalization of the expected overall surprise. By bounding the first and second derivative of $\surp(x)$, we find the upper bound of the optimal score. numerical studies show that the true optimal solution is either zero or very close to the upper bound.

\newcommand{\numerator}{}
\newcommand{\denominator}{\mathbb{N}}

\section{Results of Quidditch}\label{sec:quidditch}
\begin{restatable}[Main theorem]{theorem}{thmmain}
\label{thm:main}
In Quidditch, there exists a closed-form formula for the overall expected surprise. For all $0<p\leq \frac{1}{2}$, $q\in(0,1)$\footnote{This assumption does not lose generality since we can exchange two teams.}, 
\begin{itemize}
    \item \textbf{Symmetric $p=\frac{1}{2}$:} the optimal score is zero, i.e.,  $x^*=0$;
    \item \textbf{General $p<\frac{1}{2}$:} the optimal score $x^*\leq \lceil U(p,q)\rceil$, where $U(p,q)=\max\left\{1,-\frac{C_1}{C_2}-\frac{1}{\log(\B)}\right\}$\footnote{$C_1,C_2,\B$ are closed form functions of $p,q$: $\kappa = \sqrt{1-4(1-p)p(1-q)^2},\beta=\frac{1+\kappa}{2p(1-q)}, \hat{\beta}=\frac{1-\kappa}{2(1-p)(1-q)}$, $C_1= \frac{\left(1 - q\right) \left(1 - \B\right) \left(1 - \C\right) \left(2 p + \left(1 - p\right) \left(1/\B + \C\right)\right)}{\left(1 - \BC\right)^2} - \frac{\left(\left(1 - 2 p\right) q \left(1 - \C\right) - \C q \left(1 - \B\right)\right) }{ \left(1 - \BC\right)^2} - \frac{2 q \left(1 - p\right)}{1 - \B},C_2 =  \frac{\left(1 - q\right)  \left(1 - \B\right)  \left(1 - \C\right) \left(p + \left(1 - p\right) / \B\right)}{\left(1 - \BC\right)} - \frac{q \left(1 - 2 p\right) \left(1 - \C\right)}{\left(1 - \BC\right)}$. }. When $p$ is fixed and $q\rightarrow 0$, $U(p,q)=\frac{1}{2q}\left(\frac{1-p}{p}-1\right)+O(1)$. 
\end{itemize}
\end{restatable}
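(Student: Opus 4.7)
The plan is to execute the three-step recipe already sketched in Section~2.3 and then differentiate the resulting closed-form $\surp(x)$ to extract the statements about the optimal $x^*$. The first sentence of the theorem (existence of a closed form) is a by-product of Steps~1--2, while the case analysis for $x^*$ is carried out in Step~3 by bounding $\surp'(x)$.

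\emph{Step 1 (belief).} First I would cast the recurrence for $b_\delta$ as a piecewise-inhomogeneous second-order linear recurrence. Its characteristic polynomial $(1-q)p\,\lambda^{2}-\lambda+(1-q)(1-p)=0$ has roots that, after normalization, yield the quantities $\B$ and $\C$ defined in the theorem via $\kappa=\sqrt{1-4p(1-p)(1-q)^{2}}$. On each of the three $\delta$-regions defined by $x$ (namely $\delta<-x$, $-x\le\delta\le x$, $\delta>x$), the general solution is a constant (particular solution) plus a combination of $\B^{\delta}$ and $\C^{-\delta}$. The boundary conditions $b_\delta\to 1$ and $b_\delta\to 0$ as $\delta\to\pm\infty$ kill the divergent modes, and continuity of $b$ at the interfaces $\delta=\pm x$ together with the recurrence at those points fixes the remaining constants, so $b_\delta$ picks up an explicit dependence on $x$ through $\B^{\pm x}$ and $\C^{\pm x}$.

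\emph{Step 2 (overall surprise).} The expected visit counts $v_\delta$ satisfy a recurrence with the same characteristic polynomial (with $p$ and $1-p$ swapped, since the transition is reversed) and a delta-function inhomogeneity at $\delta=0$; the same eigendecomposition-plus-boundary-conditions procedure yields its closed form. Plugging $s_\delta^{\text{final}}$, $s_\delta^{\text{non-final}}$, and $v_\delta$ into $\surp(x)=\sum_\delta s_\delta v_\delta$ breaks the sum into finitely many geometric series in $\B^\delta$ and $\C^\delta$, each of which collapses to a rational expression in $\B,\C$. Combining and simplifying yields $\surp(x)$ as a rational function of $\B,\C,\B^x,\C^x$, establishing the first claim and supplying the continuous extension used in Step~3.

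\emph{Step 3 (bounding $x^*$).} For $p=\tfrac{1}{2}$, the closed form collapses ($\C=1/\B$, $b_0=\tfrac{1}{2}$, and $v_\delta$ is symmetric in $\delta$); I would then verify $\surp'(x)<0$ on $(0,\infty)$ by direct calculation, forcing $x^*=0$. For $p<\tfrac{1}{2}$ we have $\B>1$, and I expect $\surp'(x)$ to take the form $C_{1}\B^{-x}+C_{2}+R(x)$ with remainder $R(x)=O(\C^{x})$ decaying strictly faster than $\B^{-x}$. The leading root is near $-\log(-C_{2}/C_{1})/\log\B$, and applying the inequality $\log t\le t-1$ (or a direct second-derivative bound on $\surp$) converts this into the explicit envelope $U(p,q)=\max\{1,-C_{1}/C_{2}-1/\log\B\}$ of the theorem. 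The $q\to 0$ asymptotic then follows by Taylor-expanding: $\B\to(1-p)/p$ and $\C\to p/(1-p)$ so that $1-\BC$ vanishes to first order in $q$; tracking this singular behaviour shows $-C_{1}/C_{2}$ carries the dominant $\tfrac{1}{2q}\!\left(\tfrac{1-p}{p}-1\right)$ term while $-1/\log\B$ stays $O(1)$.

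The main obstacle I foresee is the algebraic bookkeeping in Step~2: the sum $\sum_\delta s_\delta v_\delta$ splits into roughly six cases (three $\delta$-regions times final/non-final contributions), and the closed forms of $b_\delta$ and $v_\delta$ already carry several constants that only cancel once everything is assembled. Getting $\surp(x)$ into a shape where the sign pattern of $\surp'$ is transparent---and, crucially, where the $q\to 0$ expansion can be read off by hand---is the crux. The symmetric case and the subsequent derivative bounds should be conceptually routine once the closed form is in place.
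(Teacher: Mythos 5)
Your Steps 1--2 follow the paper's route (eigendecomposition of the piecewise recurrences, killing the divergent modes at $\delta\to\pm\infty$, collapsing $\sum_\delta s_\delta v_\delta$ into geometric series), but you miss a structural feature on which all of Step 3 hinges: over the middle band $-x\le\delta\le x$ several of those ``geometric'' series have ratio exactly $1$ (the belief differences contribute factors like $\B^{x+\delta}$ while $v_\delta$ contributes $\B^{-\delta}$), so they sum to \emph{linear-in-$x$} terms, and the closed form is $\surp(x)=C_0+C_1\B^x+C_2\,x\B^x+\hat{C}_1\C^x+\hat{C}_2\,x\C^x+(\text{cross terms})$. The dominant piece of the derivative is then $(C_1\log\B+C_2+C_2 x\log\B)\B^x$, whose root is exactly $-C_1/C_2-1/\log\B$ --- that is where the theorem's $U(p,q)$ comes from. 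Your posited form $C_1\B^{-x}+C_2+O(\C^x)$ for $\surp'(x)$ has no $x\B^x$ term and a spurious constant term; its root has the shape $-\log(-C_2/C_1)/\log\B$, and no application of $\log t\le t-1$ converts that into $-C_1/C_2-1/\log\B$. Relatedly, your sign convention $\B>1$ is not the one the argument needs (the proof works with $\B,\C\in(0,1)$, $\B/\C=(1-p)/p$, and $\B\to 1$ as $q\to 0$), and this breaks your asymptotics: $-1/\log\B=(1-2p)/q+O(1)$ is \emph{not} $O(1)$; both $-C_1/C_2=(1-2p)^2/(2pq)+O(1)$ and $-1/\log\B$ are of order $1/q$, and only their \emph{sum} equals $\frac{1}{2q}\left(\frac{1-p}{p}-1\right)+O(1)$.

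Two further gaps. First, you cannot fold the $\C^x$ family into a negligible remainder when locating the zeros of $\surp'$ near the origin: the paper upper-bounds $\surp'$ by discarding only the provably negative cross terms, reduces the resulting equation $R(x)=0$ to a sign condition on two decreasing linear functions, and then needs a separately (computer-)verified claim that the second root $\theta_2\le 0.5$, together with sign conditions on all the coefficients, before it can conclude $U=\max\{1,\theta_1\}$; your outline offers no substitute for these verifications. Second, for $p=\frac{1}{2}$ the paper does not show $\surp'<0$ on all of $(0,\infty)$ --- its derivative bound only controls $x$ beyond $\theta_1=\theta_2\le 0.5$ --- so it must still compare $\surp(0)$ and $\surp(1)$ explicitly and prove $\surp(0)>\surp(1)$ via a one-variable inequality in $q$. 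You would need to supply that comparison (or a genuinely stronger monotonicity argument than the one sketched) to close the symmetric case.
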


The proof of the theorem is divided into three steps. The first step is to find the \textbf{belief} function of the audience in each state. The second step is to use the belief function to construct a closed form of \textbf{surprise} for a game. The last step is to find the extrema of surprise.

\subsection{Belief}
The first thing we need is to calculate belief $b_\delta$ for each $\delta$. 

Recall the recurrence relationship in \Cref{sec:markov},
{\normalsize\[
b_\delta=
\left(1-q\right)\left(pb_{\delta+1}+\left(1-p\right)b_{\delta-1}\right)+q\times
\begin{cases}
0, & {\delta<-x}\\
p, & {-x \leq \delta \leq x}\\
1, & {\delta > x}\\
\end{cases}.
\]}

By transforming the formula, we get
{\normalsize\[
b_\delta=
\begin{cases}
\frac{1}{(1-q)p}b_{\delta-1}-\frac{1-p}{p}b_{\delta-2}-\frac{q}{(1-q)p},&\delta>x+1\\
\frac{1}{(1-q)p}b_{\delta-1}-\frac{1-p}{p}b_{\delta-2}-\frac{q}{1-q},&1\leq \delta\leq x+1\\
\frac{1}{(1-q)(1-p)}b_{\delta+1}-\frac{p}{1-p}b_{\delta+2}-\frac{qp}{(1-q)(1-p)},&-x-1\leq \delta\leq 0\\
\frac{1}{(1-q)(1-p)}b_{\delta+1}-\frac{p}{1-p}b_{\delta+2},&\delta< -x-1\\
\end{cases}.
\]}

Let $\alpha=\frac{1+\kappa}{2p(1-q)},\beta=\frac{1-\kappa}{2p(1-q)}, \hat{\alpha}=\frac{1+\kappa}{2(1-p)(1-q)}=\frac{p}{1-p}\alpha,\hat{\beta}=\frac{1-\kappa}{2(1-p)(1-q)}=\frac{p}{1-p}\beta$. By calculation the power of transition matrix, we get the general formula for $b_\delta$.

{\normalsize\begin{align*}
b_\delta = &
\begin{cases}
\frac{-\beta^{\delta-x}((\alpha-1)(1-p)+\beta^x (b_1-\alpha b_0+(\alpha-1)p))}{\alpha-\beta}+1, &\delta\geq x\\
\frac{\alpha^\delta(b_1-\beta b_0+(\beta-1)p)-\beta^\delta(b_1-\alpha b_0+(\alpha-1)p)}{\alpha-\beta}+p, &0\leq \delta\leq x\\
\frac{\hat{\alpha}^{\delta-1}(b_0-\hat{\beta} b_1+(\beta-1)p)-\hat{\beta}^{\delta-1}(b_0-\hat{\alpha} b_1+(\hat{\alpha}-1)p)}{\hat{\alpha}-\hat{\beta}}+p, &-x\leq \delta<0\\
\frac{-\hat{\beta}^{-\delta-x}(-(\hat{\alpha}-1)p+\hat{\beta}^{x+1}(b_0-\hat{\alpha} b_1+(\hat{\alpha}-1)p))}{\hat{\alpha}-\hat{\beta}},&\delta\leq -x
\end{cases}
\end{align*}}
Using the fact that $\alpha=\frac{1}{\hat{\beta}},\hat{\alpha}=\frac{1}{\beta}$, we eliminate $\alpha, \hat{\alpha},b_0,b_1$ and obtain the following succinct form:

{\normalsize\[
b_\delta=
\begin{cases}
\frac{-\beta^{\delta-x}(1-\hat{\beta})(1-p+\beta^{2x+1}p)}{1-\beta\hat{\beta}}+1, &\delta\geq x\\
\frac{\hat{\beta}^{x-\delta+1}(1-\beta)(1-p)-\beta^{x+\delta+1}(1-\hat{\beta})p}{1-\beta\hat{\beta}}+p, &-x\leq \delta\leq x\\
\frac{\hat{\beta}^{-\delta-x}(1-\beta)(p+(1-p)\hat{\beta}^{2x+1})}{1-\beta\hat{\beta}},&\delta\leq -x
\end{cases}.
\]}

\subsection{Expected Number of Visits}
Next, we consider the expected number of visits to each state. Define $v_\delta$ as the expected number of visits when the state is $\delta$. 
{\normalsize\[
v_\delta=(1-q)(pv_{\delta-1}+(1-p)v_{\delta+1})+q\times
\begin{cases}
0,&\delta\neq 0\\
1,&\delta=0
\end{cases}
\]}

By simple transformation, we get the recurrence formula
{\normalsize\[
v_\delta=
\begin{cases}
\frac{\frac{v_{\delta-1}}{1-q}-pv_{\delta-2}}{1-p},&\delta>1\\
\frac{\frac{v_{\delta+1}}{1-q}-(1-p)v_{\delta+2}}p,&\delta<-1\\
\end{cases}.
\]}

Let $\kappa = \sqrt{1-4(1-p)p(1-q)^2}$, recall $\alpha=\frac{1+\kappa}{2p(1-q)}$, $\beta=\frac{1-\kappa}{2p(1-q)}$, $ \hat{\alpha}=\frac{1+\kappa}{2(1-p)(1-q)}=\frac{p}{1-p}\alpha$, and $\hat{\beta}=\frac{1-\kappa}{2(1-p)(1-q)}=\frac{p}{1-p}\beta$.
Then we can get the general formula for $v_\delta$:
\[\normalsize
v_\delta=
\begin{cases}
\frac{\hat{\beta}^\delta}{\kappa},&\delta\geq 0\\
\frac{\beta^{-\delta}}{\kappa},&\delta<0
\end{cases}
.
\]

\subsection{Surprise}

We use $s_\delta$ to represent the expectation of surprise generated from a single round starting at state $\delta$. 
Recall Figure~\ref{fig:trans}, each state has two types of transitions (Type 1: G/S catches the Snitch, Type 2: G/S scores a goal) to generate surprise. We divide it into two categories $s^{\text{final}}_\delta$ and $s^{\text{non-final}}_\delta$. That is, $s^{\text{final}}_\delta$ represents the expected surprise generated by transitions of Type 1, and $s^{\text{non-final}}_\delta$ represents the expected surprise generated by transitions of Type 2. Formally,

\begin{align}
s_\delta = & s^{\text{final}}_\delta + s^{\text{non-final}}_\delta\notag\\
s^{\text{non-final}}_\delta = & \left(1-q\right)\times\left(\left(b_{\delta+1}-b_\delta\right)p+\left(b_\delta-b_{\delta-1}\right)\left(1-p\right)\right)\label{fo:beliefdiff}\\
s^{\text{final}}_\delta = &
\begin{cases}
\left(1-b_\delta\right)q, & \delta\geq x\\
\left(\left(1-b_\delta\right)p + b_\delta\left(1-p\right)\right) q, & -x\leq \delta\leq x\\
b_\delta q, & \delta\leq -x
\end{cases}.
\end{align}

We first calculate the difference of $b_\delta$.

{\normalsize\[
b_{\delta+1}-b_\delta=
\begin{cases}
\frac{\beta^{\delta-x}(1-\beta)(1-\hat{\beta})(1-p+\beta^{2x+1}p)}{1-\beta\hat{\beta}}, &\delta\geq x\\
\frac{(\frac1{\hat{\beta}}-1)\hat{\beta}^{x-\delta+1}(1-\beta)(1-p)-(\beta-1)\beta^{x+\delta+1}(1-\hat{\beta})p}{1-\beta\hat{\beta}}, &-x\leq \delta\leq x\\
\frac{\hat{\beta}^{-\delta-x-1}(1-\beta)(1-\hat{\beta})(p+(1-p)\hat{\beta}^{2x+1})}{1-\beta\hat{\beta}},&\delta< -x
\end{cases}.
\]}

Recall the formula for $\E[\Delta_\belset(x)]$
\begin{align*}
\E[\Delta_\belset(x)]=&\sum_{\delta}s_{\delta}\cdot v_\delta=\underbrace{\sum_{\delta}s^{\text{non-final}}_\delta\cdot v_\delta}_{\textbf{Non-Final}}+\underbrace{\sum_{\delta}s^{\text{final}}_\delta\cdot v_\delta}_{\textbf{Final}}.
\end{align*}

We sum the two parts to get

{\footnotesize\begin{align*}
&\E[\Delta_\belset(x)]\\
=&\sum_{\delta}v_\delta\cdot s_\delta\\
=&\frac{(1-q)(1-\beta)(1-\hat{\beta})}{\kappa(1-\beta\hat{\beta})}\\
&\times\Bigg((p+(1-p)\hat{\beta})\bigg(\frac{(1-p)\hat{\beta}^{x}+p\beta^{x+1}}{1-\beta\hat{\beta}}+(1-p)x\hat{\beta}^x\bigg)\\
&+(p\beta+(1-p))\bigg(\frac{p\beta^{x}+(1-p)\hat{\beta}^{x+1}}{1-\beta\hat{\beta}}+p x\beta^{x}\bigg)\Bigg)\\
&+\frac{q}{\kappa(1-\beta\hat{\beta})}\Bigg(\frac{\beta(1-\hat{\beta})(1-p+\beta^{2x+1}p)\hat{\beta}^{x+1}}{1-\beta\hat{\beta}}+2(1-p)p\frac{(1-\hat{\beta}^{x+1})(1-\beta\hat{\beta})}{1-\hat{\beta}}\\
&+(1-2p)\left((1-\beta)(1-p)x\hat{\beta}^{x+1}-(1-\hat{\beta})p\beta^{x+1}\frac{1-(\beta\hat{\beta})^{x+1}}{1-\beta\hat{\beta}}\right)\\
&+(1-2p)\left(-(1-\hat{\beta})p x\beta^{x+1}+(1-\beta)(1-p)\hat{\beta}^{x+1}\frac{1-(\beta\hat{\beta})^{x+1}}{1-\beta\hat{\beta}}\right)\\
&+2(1-p)p\frac{(\beta-\beta^{x+1})(1-\beta\hat{\beta})}{1-\beta}+\frac{\hat{\beta}(1-\beta)(p+(1-p)\hat{\beta}^{2x+1})\beta^{x+1}}{1-\beta\hat{\beta}}\Bigg).
\end{align*}}
We start to prove Theorem~\ref{thm:main} by deriving the closed-form formula for the overall expected surprise. 

\subsection{Optimal \(x^*\)}
We define the continuous generalization of the overall expected surprise as $\surp(x)=\E[\Delta_\belset(x)]$ whose domain is $x\in[0,+\infty)$. We write $\surp(x)$ in the form where $x$ is the principal element: $\surp(x)=C_0 + C_1 \B^x + C_2 x \B^x + C_3 \B^{2x}\C^x + \hat{C}_1 \C^x + \hat{C}_2 x \C^x + \hat{C}_3 \B^x\C^{2x}$, where each $C_i,\hat{C}_i$ and $\beta,\hat{\beta}$ are functions of $p$ and $q$. 

Here are several properties of these coefficients that will be used later. The first two properties follow from the definition and the third is proved by Mathematica. 
\begin{claim}
When $0<p\leq\frac{1}{2},0<q<1$,
\begin{enumerate}
    \item $\forall i\in\{1,2,3\}, C_i(p,q)=\hat{C}_i(1-p,q)$,
    \item $\B(p,q)=\C(1-p,q)$,
    \item $C_0>0$, $C_2,C_3,\hat{C}_1,\hat{C}_2,\hat{C}_3>0$, $0<\B\leq\C<1$.
\end{enumerate}
\label{attributes}
\end{claim}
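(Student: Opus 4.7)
The plan is to dispatch the three parts in the order (2), (1), (3), using the first two symmetries to organize the thinking before the final computational verification.

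I would begin with part (2), the simplest statement. Because $\kappa=\sqrt{1-4p(1-p)(1-q)^2}$ is manifestly invariant under the substitution $p\mapsto 1-p$, substituting into the formula $\hat{\beta}(p,q)=\frac{1-\kappa}{2(1-p)(1-q)}$ immediately yields $\hat{\beta}(1-p,q)=\frac{1-\kappa}{2p(1-q)}=\beta(p,q)$. Applying the same identity with the roles of $p$ and $1-p$ exchanged also gives $\beta(1-p,q)=\hat{\beta}(p,q)$, which I will need in the next step.

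Part (1) I would handle via a game-symmetry argument rather than by grinding through each $C_i$ formula. Swapping the labels ``Gryffindor'' and ``Slytherin'' maps the Quidditch process with parameter $p$ to the one with parameter $1-p$; and since the overall surprise $\sum_i |\bel_i-\bel_{i-1}|$ depends only on absolute changes in belief, it is invariant under this relabeling. Hence, as functions of $x$, $\surp(x;p,q)=\surp(x;1-p,q)$. Expanding both sides using
\[
\surp(x)=C_0+C_1\beta^x+C_2 x\beta^x+C_3\beta^{2x}\hat{\beta}^x+\hat{C}_1\hat{\beta}^x+\hat{C}_2 x\hat{\beta}^x+\hat{C}_3\beta^x\hat{\beta}^{2x},
\]
and applying part (2) (which swaps $\beta\leftrightarrow\hat{\beta}$), the right-hand side for $1-p$ is obtained from the one for $p$ simply by interchanging $\beta$ and $\hat{\beta}$. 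Since the seven functions $\{1,\beta^x,\hat{\beta}^x,x\beta^x,x\hat{\beta}^x,\beta^{2x}\hat{\beta}^x,\beta^x\hat{\beta}^{2x}\}$ are linearly independent on $[0,\infty)$ for generic $\beta\neq\hat{\beta}$, matching coefficients pins down $C_i(p,q)=\hat{C}_i(1-p,q)$ for each $i\in\{1,2,3\}$, with $C_0(p,q)=C_0(1-p,q)$ as a bonus.

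For part (3), I would separate the easy bounds from the hard positivity. That $\beta,\hat{\beta}>0$ is immediate from $\kappa\in[0,1)$ and positive denominators; $\hat{\beta}<1$ and $\beta<1$ reduce, after clearing denominators and squaring, to $\kappa^2=1-4p(1-p)(1-q)^2$; and the relative ordering of $\beta$ and $\hat{\beta}$ follows from the identity $\hat{\beta}=\frac{p}{1-p}\beta$ combined with $p\leq\tfrac{1}{2}$. The main obstacle is the positivity of $C_0,C_2,C_3,\hat{C}_1,\hat{C}_2,\hat{C}_3$: each is an elaborate rational expression in $p,q$. My strategy would be to factor the clearly positive quantities $(1-\beta),(1-\hat{\beta}),(1-\beta\hat{\beta}),\kappa$ out of each numerator, reducing the remaining factor to a polynomial in $p,q$ whose non-negativity on $(0,\tfrac{1}{2}]\times(0,1)$ can be checked by explicit factoring or a sum-of-squares argument. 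As the authors note, the bookkeeping is best outsourced to Mathematica; the intellectual step is selecting the factorization that makes positivity manifest rather than relying on brute symbolic verification.
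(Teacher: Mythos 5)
Your proposal is correct and, where the paper actually says anything, consistent with it: the paper's entire proof is the sentence ``the first two properties follow from the definition and the third is proved by Mathematica,'' so your parts (2) and (3) match its treatment (a one-line substitution using the invariance of $\kappa$ under $p\mapsto 1-p$, and symbolic verification of positivity, respectively). Where you genuinely diverge is part (1): the paper's ``follows from the definition'' most plausibly means reading the identities off the explicit closed forms of $C_i$ and $\hat{C}_i$ after substituting $p\mapsto 1-p$, whereas you derive them conceptually from the relabeling symmetry $\surp(x;p,q)=\surp(x;1-p,q)$ plus uniqueness of the expansion in the basis $\{1,\B^x,x\B^x,\C^x,x\C^x,\B^{2x}\C^x,\B^x\C^{2x}\}$. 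Your route is cleaner and explains \emph{why} the symmetry holds (and yields $C_0(p,q)=C_0(1-p,q)$ for free), but it does need the genericity/linear-independence step you flag and a continuity argument to cover $p=\frac12$, where $\B=\C$ and the basis degenerates; the brute-force substitution avoids both. One thing your derivation should surface explicitly: with the body's definitions $\C=\frac{p}{1-p}\B$ and $p\leq\frac12$, you get $0<\C\leq\B<1$, which is the \emph{reverse} of the ordering ``$0<\B\leq\C<1$'' printed in the claim (and note the footnote to the main theorem defines $\beta$ with $1+\kappa$ while the body uses $1-\kappa$); this appears to be a transposition in the paper rather than an error in your argument, and is confirmed by the paper's later use of $\frac{\B}{\C}=\frac{1-p}{p}\geq 1$, but you should state which ordering you are actually proving.
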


\paragraph{Proof outline} We upper-bound the derivative of $\surp(x)$ by a relatively simple format and then shows that the upper-bound function must be strictly less than zero when $x\geq \max\{U(p,q),1\}$. Thus, the local maximal/minimal of $\surp(x)$ is less than $\max\{U(p,q),1\}$ which proves our upper-bound result. For even case, we show that $\max\{U(p,q),1\}$ must be less than $1$. Thus, by directly comparing $\surp(0)$ and $\surp(1)$, we show that $\surp(0)$ is a global maximum. 

We start by calculating the derivative of $\surp(x)$. 
\begin{align*}
\frac{\mathrm{d}}{\mathrm{d}x}\surp(x)= & \left(C_1\log\B+C_2+C_2\log\B x\right)\B^x+ C_3\log\left(\B^2\C\right)\B^{2x}\C^x\\
+ & \left(\hat{C}_1\log\B+\hat{C}_2+\hat{C}_2\log\C x\right)\C^x + \hat{C}_3\log\left(\B\C^2\right)\B^x\C^{2x}.
\end{align*}
Note that both $C_3\log\left(\B^2\C\right)\B^{2x}\C^x$ and $\hat{C}_3\log\left(\B\C^2\right)\B^x\C^{2x}$ are negative. We can upper-bound the derivative by function $R(x)$, where
\begin{align*}
R(x) = & \left(C_1\log\B+C_2+C_2\log\B x\right)\B^x + \left(\hat{C}_1\log\C+\hat{C}_2+\hat{C}_2\log\C x\right)\C^x.
\end{align*}
Recall that $\frac{\B}{\C}=\frac{1-p}{p}$, The equation that $R(x)= 0$ is equivalent to that
\begin{align*}
& -\left(C_1\log\B+C_2+C_2\log\B  x\right) = \left(\hat{C}_1\log\C+\hat{C}_2+\hat{C}_2\log\C x\right)\left(\frac{p}{1-p}\right)^x.
\end{align*}
Let $f(x)=C_1\log\B+C_2+C_2\log\B x$ and $g(x)=\hat{C}_1\log\C +\hat{C}_2+\hat{C}_2\log\C x$, since $\left(\frac{p}{1-p}\right)^x$ is always positive, we have $f(x)g(x)<0$ or both $f(x)$ and $g(x)$ are equal to $0$ in order to satisfy the equation. As $f(x)$ and $g(x)$ are both decreasing linear functions, $f(x)g(x)<0$ holds only if $x$ lies in the interval constructed by roots of $
f(x)=0$ and $g(x)=0$, which is $\theta_1=-\frac{C_1}{C_2}-\frac{1}{\log\left(\B\right)}$ and $\theta_2=\frac{\hat{C}_1}{\hat{C}_2}-\frac{1}{\log\left(\C\right)}$. That is, 
\[R(x)=0\Rightarrow
\begin{cases}
x\in\left(\min\left\{\theta_1,\theta_2\right\},\max\left\{\theta_1,\theta_2\right\}\right) & \theta_1\neq\theta_2\\
x = \theta_1 & \theta_1 = \theta_2
\end{cases}.\]

For $\theta_2$, we employ Mathematica to prove the following claim:
\begin{claim}
For all $0<p\leq\frac{1}{2}$, $0<q<1$, $\theta_2(p,q)\leq 0.5$.
\label{asymptotes}
\end{claim}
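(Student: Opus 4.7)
My plan is to substitute the explicit closed-form expressions for $\hat{C}_1, \hat{C}_2, \C$ into $\theta_2(p, q)$, reduce $\theta_2 \leq 1/2$ to a polynomial inequality via an elementary logarithm bound, and then discharge the resulting two-variable inequality by computer algebra.

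First, I would exploit properties 1 and 2 of \Cref{attributes}: since $\hat{C}_i(p, q) = C_i(1-p, q)$ and $\C(p, q) = \B(1-p, q)$, one has $\theta_2(p, q) = \theta_1(1-p, q)$, so the claim is equivalent to $\theta_1(p', q) \leq 1/2$ for $p' \in [1/2, 1)$ and $q \in (0, 1)$. This lets me work entirely with the closed-form expressions for $C_1, C_2, \B$ already provided in the footnote of \Cref{thm:main}. Next, because $C_2 > 0$ and $\log \B < 0$, the target rearranges to $|\log \B| \geq 2 C_2 / (C_2 + 2 C_1)$. Applying the elementary bound $|\log \B| \geq 1 - \B$ (which is sharp exactly in the worst regime $\B \to 1^-$), it suffices to prove the rational inequality
\[
\frac{C_1}{C_2} \;\geq\; \frac{1 + \B}{2(1 - \B)},
\]
which after substitution is algebraic in $(p', q)$ modulo the single radical $\kappa = \sqrt{1 - 4p'(1-p')(1-q)^2}$.

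The main obstacle is the algebraic complexity of this last inequality. I would clear denominators (noting $1 - \BC > 0$ and $1 - \B > 0$ on the regime of interest), isolate $\kappa$, and square once to eliminate the radical, leaving a polynomial inequality of high total degree in $(p', q)$ over $[1/2, 1) \times (0, 1)$. This is well within the reach of Mathematica's \texttt{Reduce} via cylindrical algebraic decomposition. As an analytic sanity check, I would verify the boundary cases directly: at $p' = 1/2$ (where $\B = \C$ and many terms collapse), as $q \to 0^+$ (where $\B \to (1-p')/p'$ and $C_1/C_2$ admits a clean Taylor expansion in $q$), and as $q \to 1^-$ (where $\kappa \to 1$, $\B \to 0$, and $-1/\log \B \to 0$, so the bound is slack). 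Should the surrogate inequality prove too weak in a narrow region near $\B = 1$, I would replace $|\log \B| \geq 1 - \B$ by the sharper Pad\'e-type lower bound $|\log \B| \geq 2(1 - \B)/(1 + \B)$ and iterate the same reduction.
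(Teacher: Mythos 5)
Your proposal is sound in outline and is, if anything, more principled than what the paper actually does: the paper offers no argument for this claim beyond the sentence ``we employ Mathematica to prove the following claim,'' whereas you first reduce the transcendental inequality to a purely algebraic one (so that cylindrical algebraic decomposition is genuinely a decision procedure rather than a heuristic) and only then invoke the computer. Your symmetry step is correct for the intended quantity: the root of $g(x)=0$ is $\theta_2=-\hat{C}_1/\hat{C}_2-1/\log\C$ (the paper's displayed $\theta_2$ has a sign typo on the first term), and then \Cref{attributes} gives $\theta_2(p,q)=\theta_1(1-p,q)$ exactly as you use it. One caution on the analytic reduction: the bound $|\log\B|\geq 1-\B$ loses exactly the available slack in the worst regime. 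Writing the target as $C_1/C_2\geq 1/|\log\B|-1/2$ and the surrogate as $C_1/C_2\geq 1/(1-\B)-1/2$, the gap $1/(1-\B)-1/|\log\B|$ tends to $1/2$ as $\B\to 1^-$ (which happens as $p\to\frac12$, $q\to 0^+$), so the first surrogate demands $\theta_1\lesssim 0$ there rather than $\theta_1\leq\frac12$; by contrast, the Pad\'e bound $|\log\B|\geq 2(1-\B)/(1+\B)$ reduces the target to $C_1/C_2\geq\B/(1-\B)$, which at $p=\frac12$ is precisely the strict inequality $C_1-(C_1+C_2)\B>0$ that the paper establishes in \Cref{claim:surp01}. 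So treat the Pad\'e refinement as the primary tool, not a contingency. With that adjustment the plan is a legitimate and arguably more rigorous version of the paper's computer-assisted verification, though like the paper's, its final step still rests on a machine check of a two-variable polynomial inequality.
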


If $\theta_1\leq\theta_2$, according to Claim \ref{asymptotes}, $1$ is an upper bound of the roots of $R(x)=0$. Otherwise, $\theta_1$ is an upper bound of the roots of $R(x)=0$. Let $U(p,q)=\max\{1,\theta_1\}$. $U(p,q)$ is an upper bound of the roots of $R(x)=0$. When $x>U(p,q)\geq \max\left\{\theta_1,\theta_2\right\}$, $f(x)<0$ and $g(x)<0$, thus $R(x)$ is always negative. Therefore, when $x>U(p,q)$, the $\frac{\mathrm{d}}{\mathrm{d}x}\surp(x)<R(x)<0$. This implies that
\[U(p,q)\geq\argmax_{x\geq 0}\surp(x).\] Since the actual domain of $x$ is $\mathbb{N}$, $\lceil U(p,q)\rceil$ is an upper bound of $x^*$.

For the even case where $p=\frac{1}{2}$, based on Claim \ref{attributes}, we have $C_1=\hat{C}_1,C_2=\hat{C}_2,C_3=\hat{C}_3,\B=\C,\theta_1=\theta_2$. Thus, $U(\frac{1}{2},q)=\max\{1,\theta_1\}\leq 1$, which implies that either $x=0$ or $x=1$ can be optimal.
\begin{align*}
& \surp(0)-\surp(1) = \left(C_0+2(C_1+C_3)\right)-\left(C_0+2(C_1\beta+C_2\beta+C_3\beta^3)\right)\\
> & 2(C_1-(C_1+C_2)\B) = -\frac{q(q^2-2q-1+2\sqrt{q(2-q)})}{2\sqrt{q(2-q)}(q-2+\sqrt{q(2-q)})^2}.
\end{align*}
\begin{claim}\label{claim:surp01}
    $-\frac{q(q^2-2q-1+2\sqrt{q(2-q)})}{2\sqrt{q(2-q)}(q-2+\sqrt{q(2-q)})^2}>0$ if $0<q<1$.
\end{claim}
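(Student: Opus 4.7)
The plan is to reduce the claim to a simple polynomial inequality and then exhibit an explicit algebraic identity that makes the polynomial inequality obvious.

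First I would analyze the sign of each factor of the denominator $2\sqrt{q(2-q)}\,(q-2+\sqrt{q(2-q)})^2$ for $0<q<1$. The factor $\sqrt{q(2-q)}$ is strictly positive since $q(2-q)>0$. The squared factor is nonnegative and vanishes only when $\sqrt{q(2-q)}=2-q$; squaring this would force $q(2-q)=(2-q)^2$, hence $q=2-q$, so $q=1$, which is excluded. Therefore the denominator is strictly positive, and the inequality is equivalent (after multiplying by $-D/q$, both positive) to the statement
\[
q^2-2q-1+2\sqrt{q(2-q)} \;<\; 0,
\]
i.e.\ to $2\sqrt{q(2-q)} < 1+2q-q^2$.

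Next I would check that both sides are positive so that squaring is legitimate. The left-hand side is a square root, hence $\ge 0$. The right-hand side can be rewritten as $1+2q-q^2 = 2-(1-q)^2 \ge 2-1 = 1 > 0$ for $0<q<1$. Since both sides are nonnegative, the inequality is equivalent to $(1+2q-q^2)^2 > 4q(2-q)$.

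The key step is the clean identity
\[
(1+2q-q^2)^2 - 4q(2-q) = (1-q)^4,
\]
which one obtains by direct expansion: the left side equals $1+4q+2q^2-4q^3+q^4-8q+4q^2 = 1-4q+6q^2-4q^3+q^4 = (1-q)^4$. For $0<q<1$ we have $(1-q)^4>0$, so the squared inequality holds, completing the argument. There is no substantial obstacle here; the only ``trick'' is spotting the factorization as a perfect fourth power, and even without spotting it, expanding both sides and verifying the polynomial difference is strictly positive on $(0,1)$ is a one-line calculation.
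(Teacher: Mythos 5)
Your proof is correct, and it takes a genuinely different route from the paper's. The paper argues by monotonicity: it computes $\frac{\mathrm{d}}{\mathrm{d}q}\left(q^2-2q-1+2\sqrt{q(2-q)}\right)=(2-2q)\left(\frac{1}{\sqrt{(2-q)q}}-1\right)>0$ on $(0,1)$, concludes that $q^2-2q-1+2\sqrt{q(2-q)}$ is strictly increasing there, evaluates it at the endpoint $q=1$ to get $0$, and hence deduces it is negative on $(0,1)$; combined with the positivity of the denominator this gives the claim. You instead stay entirely algebraic: after establishing positivity of the denominator (your check that the squared factor vanishes only at $q=1$ is the one point the paper glosses over, so it is good that you spelled it out), you rewrite the target as $2\sqrt{q(2-q)}<1+2q-q^2$, verify both sides are nonnegative, square, and observe the identity $(1+2q-q^2)^2-4q(2-q)=(1-q)^4>0$. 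Your route avoids calculus and yields a quantitative form of the negativity (the gap is exactly $(1-q)^4$ after squaring), while the paper's route avoids the need to justify squaring and to spot the factorization. It is worth noting the two arguments are secretly powered by the same fact: the paper's derivative is positive precisely because $q(2-q)=1-(1-q)^2<1$, the same perfect square that reappears as $(1-q)^4$ in your identity. Either proof is complete and correct.
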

\begin{proof}[Proof of Claim~\ref{claim:surp01}]
When $0<q<1$,
\begin{align*}
\frac{\mathrm{d}}{\mathrm{d}q}\left(q^2-2q-1+2\sqrt{q(2-q)}\right)=(2-2q)\left(\frac{1}{\sqrt{(2-q)q}}-1\right)> 0.
\end{align*}
So $q^2-2q-1+2\sqrt{q(2-q)}$ is strictly increasing in $q\in (0,1)$. As the value of $q^2-2q-1+2\sqrt{q(2-q)}$ when $q=1$ is $0$, $q^2-2q-1+2\sqrt{q(2-q)}<0$ when $0<q<1$. In addition, $
2\sqrt{q(2-q)}(q-2+\sqrt{q(2-q)})^2>0$ when $0<q<1$. As a result, when $0<q<1$, $-\frac{q(q^2-2q-1+2\sqrt{q(2-q)})}{2\sqrt{q(2-q)}(q-2+\sqrt{q(2-q)})^2}>0$.
\end{proof}
Therefore, $\surp(0)> \surp(1)$. Thus, $x^*=0$ in the even case. 

\subsection{Approximating the Upper-Bound}
In this section, we approximate the upper bound when $q$ goes to zero. We use Taylor expansion to obtain the approximation.

When $q\rightarrow 0^+$, by fixing $0<p<\frac12$ as a constant, we use Taylor expansion to obtain $\beta=1-\frac{q}{1-2p}+O(q^2)$, $C_1=-2(1-p)p+O(q)$, $C_2=\frac{4q(1-p)p^2}{(1-2p)^2}+O(q^2)$, and $\frac{1}{\log\beta}=-\frac{1-2p}{q}+O(1)$. Then we get the approximation of $\theta_1$:
\begin{align*}
\theta_1=-\frac{C_1}{C_2}-\frac{1}{\log\beta}=\frac{(1-2p)^2}{2pq}+O(1)+\frac{1-2p}q+O(1)=\frac{1}{2q}\left(\frac{1-p}{p}-1\right)+O(1).
\end{align*}

\paragraph{Estimating optimal score by upper-bound} When we estimate the optimal score, we pick the best over zero and $\lceil U(p,q)\rceil$. Formally, we define \[\tilde{x}:=\argmax_{x=0,1,...,\lceil U(p,q)\rceil}\E[\Delta_\belset(x)].\] This can be easily calculated with the closed-form formula for surprise. We numerically compare $\tilde{x}$ to the true optimal score (Figure~\ref{fig:optimal}, Figure~\ref{fig:xtilde}). In our numerical study, we sample $10^6$ points in $(p,\frac{1}{q})=((0,0.5),[1.1,100])$ and $\tilde{x}$ equals $x^*$ for 99.9997\% points. Thus, in most area, $\tilde{x}$ provides a pretty good estimation for $x^*$.

\begin{figure}[!ht]\centering
  \subfigure[Optimal $x^*$]{\centering\includegraphics[width=.3\linewidth]{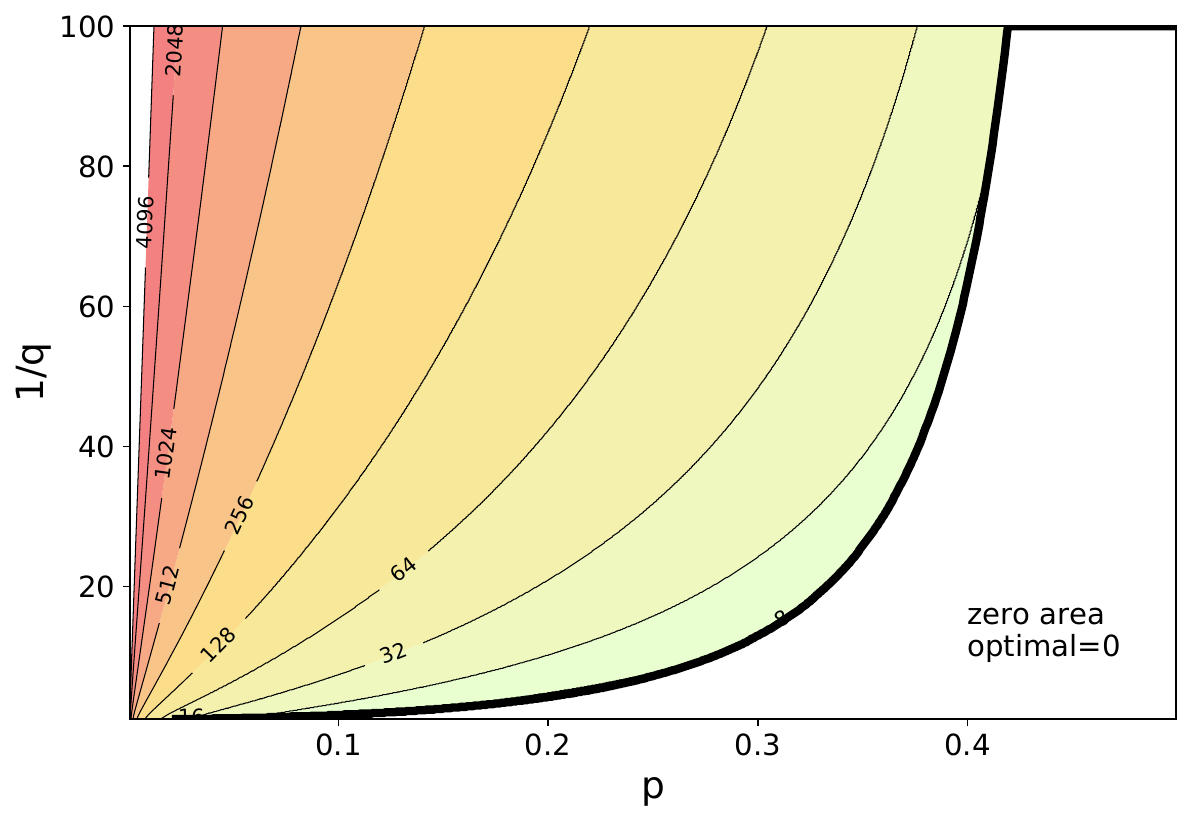}\label{fig:optimal}}
  \subfigure[$\Tilde{x}$]{\centering\includegraphics[width=.3\linewidth]{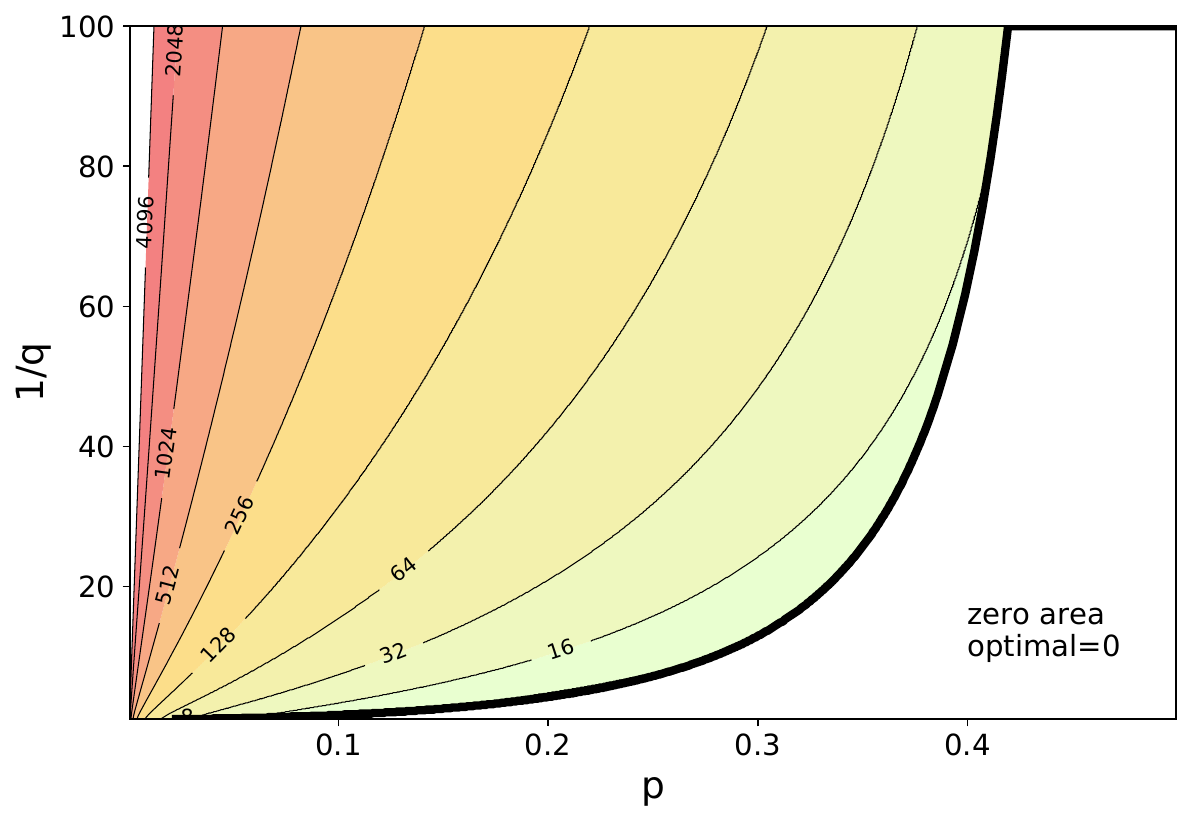}\label{fig:xtilde}}
  \subfigure[$\frac1{2q}(\frac{1-p}{p}-1)$]{\centering\includegraphics[width=.3\linewidth]{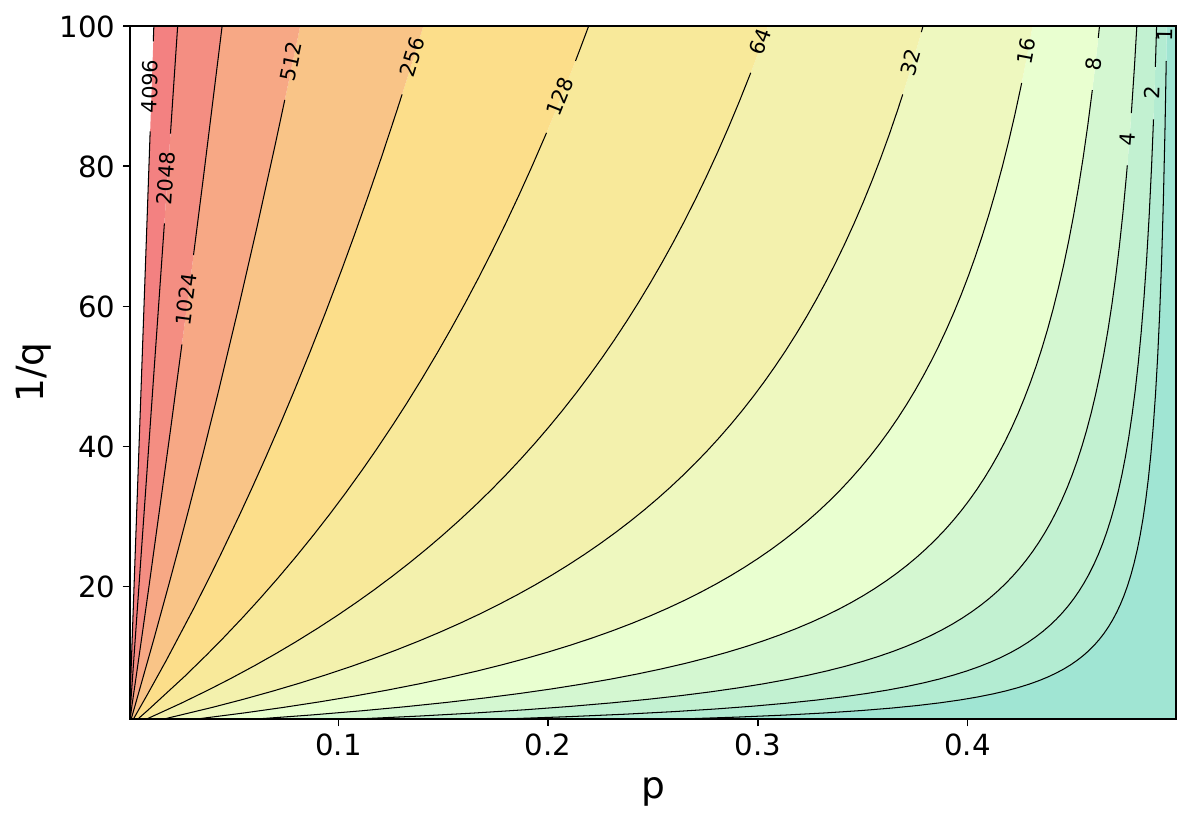}\label{fig:taylor_optimal}}
  \caption{Optimal score and numerical estimations}
\end{figure}

\Cref{fig:optimal} presents the contours of the optimal score for various cases. The result shows that the optimal score increases as the expected time compared to Snitch's difficulty increases ($\frac1q$ increases), and decreases as the match-up becomes more unbalanced ($p$ decreases). There exists an area where the optimal score is zero, called the \emph{zero area}. When $\frac1q$ decreases, the game lasts shorter in expectation, the zero area becomes larger, i.e., the game should be much more uneven in expectation to make the optimal score strictly greater than zero. We also provide numerical results illustrating how the overall surprise depends on the score in \Cref{sec:surp_curve}.

\section{General MOBA Games}\label{sec:moba}

To study the design of ``Game Changer'' in more complicated settings, we design a general model for MOBA games. We then study the effect of the ``Game Changer'' by numerical simulations. We start by introducing our framework model and then fit parameters based on real-world data from LOL and DOTA 2. Finally, we numerically calculate the optimal reward of the ``Game Changer'' for different situations and draw conclusions.

\begin{figure}[!ht]\centering
  \includegraphics[width=.815\linewidth]{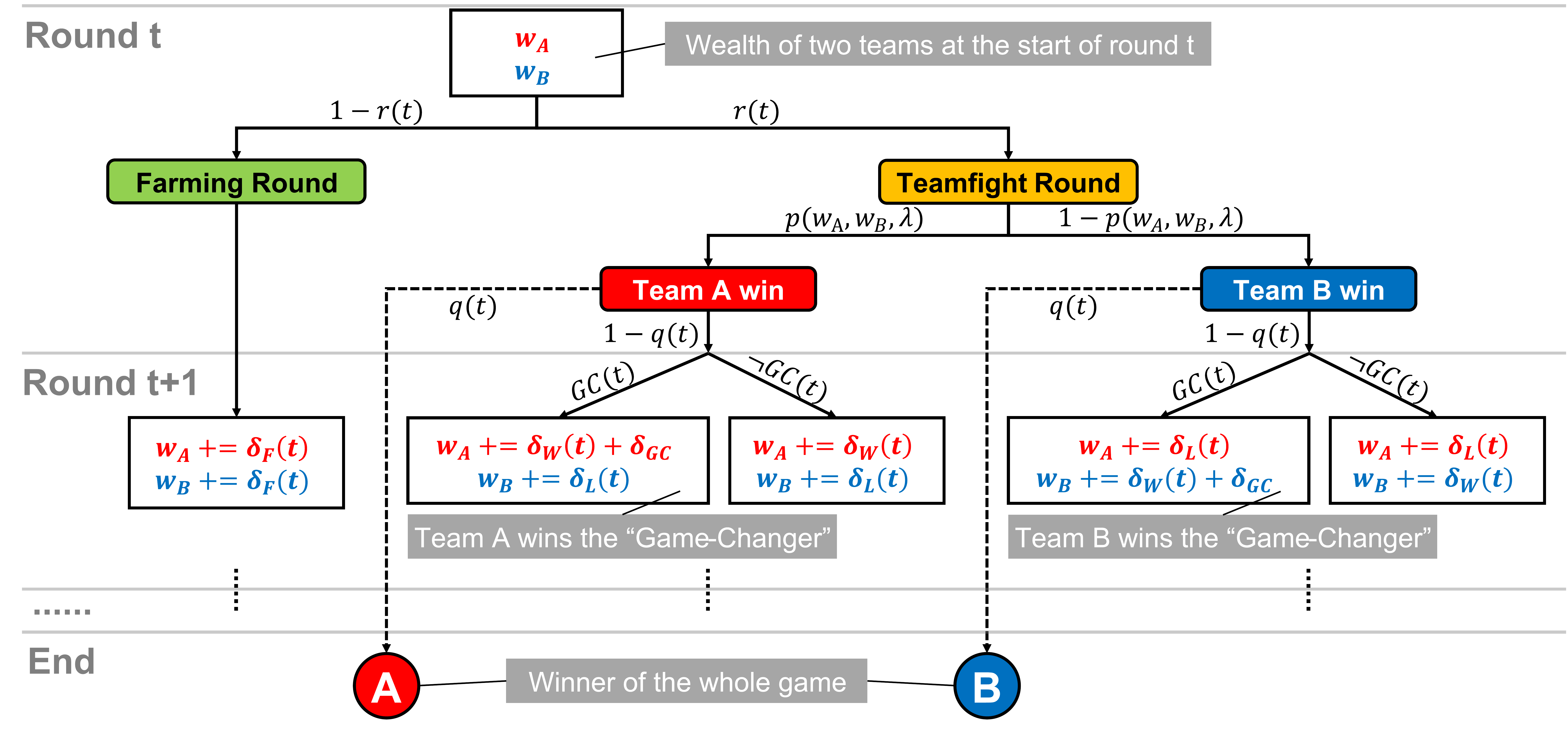}
  \caption{\textbf{Transitions of the MOBA game in one round}}
  \label{fig:mobatrans}
\end{figure}

\subsection{Our Model}\label{model}

MOBA games are real-time strategy games in which two teams (usually 5 players each) fight against each other. Both teams improve their character's strength by earning wealth during the game. The game ends until one team destroys the opponent's main building. The process mainly revolves around character development and cooperative teamfight \cite{yang2014identifying}. We start to describe how our model captures the important features of MOBA games.

\paragraph{Accumulating team wealth}
During the game, players earn wealth to buy items for their characters. The strength of a character can be roughly estimated by the amount of wealth. 

We model MOBA as a multi-round game between two teams. Each round represents one minute of the real-world game. Like Quidditch, the number of rounds is indeterminate. We define the \emph{team wealth} $w_{i}(t),i\in\{A,B\}$ as the sum of the wealth of all characters in team $i$ at the end of round $t$.

At the start of the game, two teams have their basic wealth $w_A(0),w_B(0)$. There are two types of rounds, farming (F) and teamfight (T). Each round $t$ is a teamfight round with independent probability $r(t)$. In a farming round ($type(t)=F$), players earn wealth mainly by destroying buildings and killing periodically respawning creeps. Both teams receive the same team wealth $\deltaw_F(t)$ at the farming round. In a teamfight round ($type(t)=T$), the winning team of the teamfight will get a lot of wealth $\deltaw_W(t)$, while the losing team can only get $\deltaw_L(t)$ which is less than $\deltaw_W(t)$. 

The ``Game Changer'' is spawned in a specific round (like the 20 minutes of the game) and respawned in several rounds after being killed. After a teamfight round, if the ``Game Changer'' exists, the winning team will kill it and gain bonus wealth $\deltaw_{GC}$.

\Cref{teamwealth_def} formally describes how the team wealth $w_{i}(t),i\in\{A,B\}$ are accumulated in different rounds, where \[GC(t)=\begin{cases}\mathrm{False} &\text{the ``Game Changer'' does not exist}\\\mathrm{True} &\text{the ``Game Changer'' exists}\end{cases}.\] For team $i=A,B$,

\begin{equation}
\label{teamwealth_def}
    w_{i}(t)=\begin{cases}
    w_{i}(t-1)+\deltaw_F(t) & type(t)=F\\
    w_{i}(t-1)+\deltaw_L(t) & type(t)=T \land \text{$i$ loses}\\
    w_{i}(t-1)+\deltaw_W(t) & type(t)=T \land \text{$i$ wins} \land \neg GC(t)\\
    w_{i}(t-1)+\deltaw_W(t)+\deltaw_{GC} & type(t)=T \land \text{$i$ wins} \land GC(t)\\
    \end{cases}
\end{equation}

\paragraph{Winning probability of a teamfight $p(w_A(t),w_B(t),\lambda)$}
At round $t$, if it is a teamfight round, the probability that team A wins in this round $p(w_A(t),w_B(t),\lambda)$ depends on their current wealth $w_A(t),w_B(t)$ and the relative ratio of their team ratings $\lambda\in[0,\infty)$. The rating quantifies the team's gaming skills. When two teams have equal ratings, $\lambda=1$. When team A is more skilled, $\lambda>1$.

\paragraph{Endgame}
The end of the game happens when the main building of one team is destroyed by another team. If a teamfight round's winning team kills most characters of another team, before the dead characters revive, the winning team will also have a chance to destroy the main building and win the whole game. The revival time of characters gets longer as the game progresses, which gives more chance for the later teamfight round's winning team to win the whole game. Thus we assume that with probability $q(t)$, a teamfight in round $t$ ends the game immediately and the winner of the teamfight becomes the winner of the game. 

\Cref{fig:mobatrans} summarizes the transitions of the MOBA game in one round.

\subsection{Maximizing the Expected Overall Surprise}

We aim to compute the optimal reward of the ``Game Changer'', $\deltaw_{GC}$ which, in expectation, maximizes the overall surprise. Formally, we aim to find the optimal \[\deltaw_{GC}^*=\argmax_{\deltaw_{GC}} \E[\Delta_\belset(\deltaw_{GC})],\] where $\Delta_\belset(\deltaw_{GC})$ is the overall surprise where the reward of the ``Game Changer'' is $\deltaw_{GC}$. When we calculate the surprise, we pick other functions $r(t),p(t),q(t),\deltaw_W(t),\deltaw_L(t),\deltaw_F(t)$ by fitting the empirical data. We will vary the relative rating ratio $\lambda$ between the two teams. Recall that $\lambda\in [0,+\infty)$ is the relative ratio of the team rating of two teams. Without loss of generality, we consider $\lambda\geq 1$. 

In our model, the process of MOBA is a Markov chain where each state is described by the current time $t$, the current wealth of the two teams $w_A,w_B$ and the time that the last ``Game Changer'' is killed. Given the Markov chain, we calculate the overall expected surprise by backward induction. We also employ approximation techniques based on discretization to reduce the complexity. We defer the detailed description of the algorithm to the full version.

\subsection{Selecting $r(t),p(t),q(t),\deltaw_W(t),\deltaw_L(t),\deltaw_F(t)$}
We use real-world data of the two most popular MOBA games, LOL and DOTA 2, to estimate the teamfight probability function $r(t)$, winning probability function $p(w_A(t),w_B(t),\lambda)$, endgame probability function $q(t)$ and wealth income functions $\deltaw_F(t),\deltaw_W(t),\deltaw_L(t)$.

\paragraph{Game data}

For LOL, we use data of 7046 professional games in a season from LOL World Championship 2020 (exclusive) to LOL World Championship 2021 (inclusive)\footnote{Data is from Games of Legends, https://gol.gg/esports/home/}.
For DOTA 2, we use data of 7230 professional games from The International 2019 (exclusive) to ESL One Stockholm Major 2022 (inclusive)\footnote{Data is from OpenDota API, http://docs.opendota.com/}. We defer the data processing details to the full version.

\paragraph{Teamfight probability $r(t)$}
The solid lines in \Cref{fig:pfight} show the probability of a teamfight in each round $t$ estimated from empirical data. Both the games gradually increase the probability until an upper-bound is achieved. In our numerical studies, we approximate the teamfight probability function by piece-wise linear functions, which are shown by the dashed lines in \Cref{fig:pfight}.

\paragraph{Winning probability of a teamfight $p(w_A,w_B,\lambda)$}

We select the winning probability function from the following family $\{p_{\theta}(\cdot)\}_{\theta}$:

\[
p_{\theta}(w_A,w_B,\lambda)=\begin{cases}
\mathrm{sigmoid}(\theta\frac{\lambda w_A-w_B}{w_B}) & \lambda w_A\geq w_B\\
1-\mathrm{sigmoid}(\theta\frac{w_B-\lambda w_A}{\lambda w_A}) & \lambda w_A < w_B
\end{cases}
\]
where $\mathrm{sigmoid}(x)=\frac{1}{1+\exp(-x)}$.

Since the rating ratio between teams is frequently changing and almost impossible to collect, we consider $\lambda=1$ when fitting the data. We fit the empirical data to select $\theta=9.41$ for LOL and $\theta=5.85$ for DOTA 2. \Cref{fig:winrate} visualizes the empirical winning frequency of a teamfight and our fitted probability.

\begin{figure}[t]\centering
  \subfigure[LOL]{\includegraphics[scale=0.115]{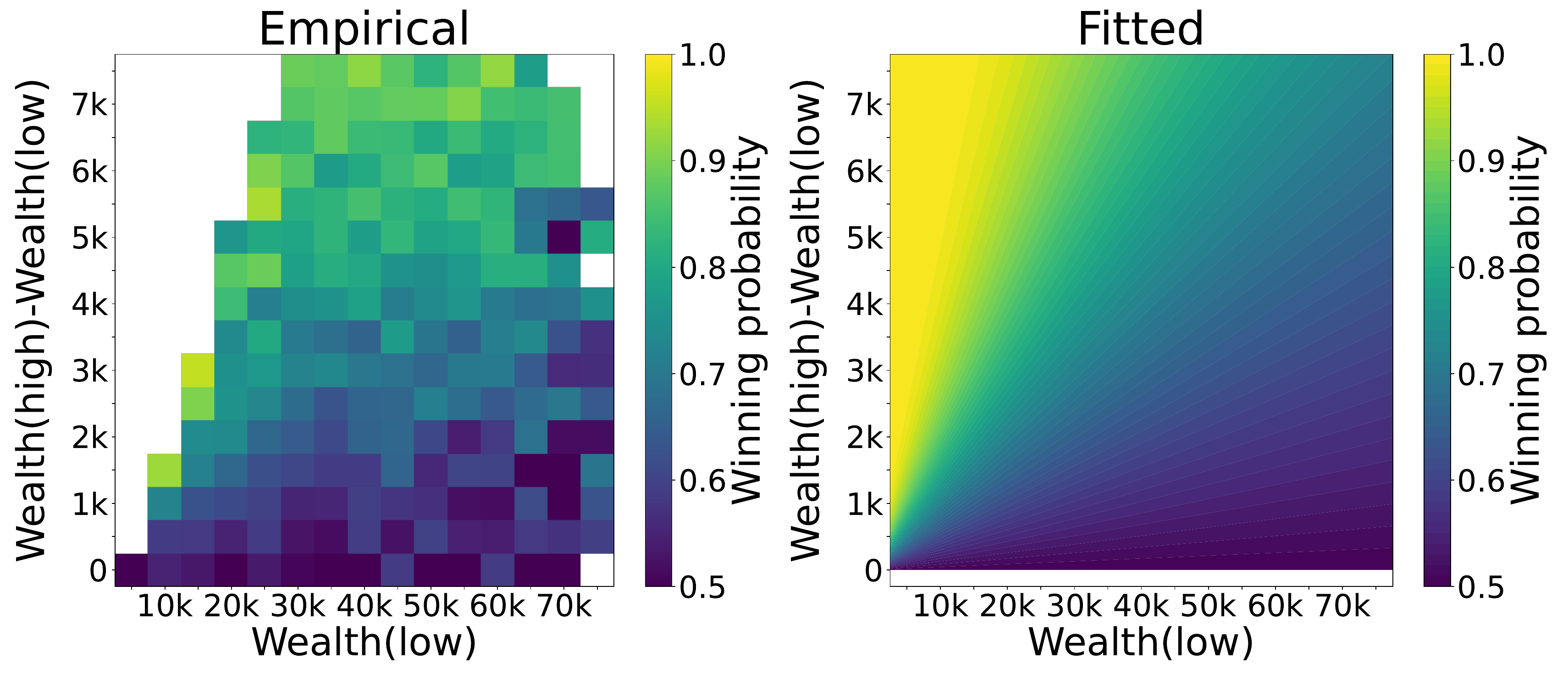}}
  \subfigure[DOTA 2]{\includegraphics[scale=0.115]{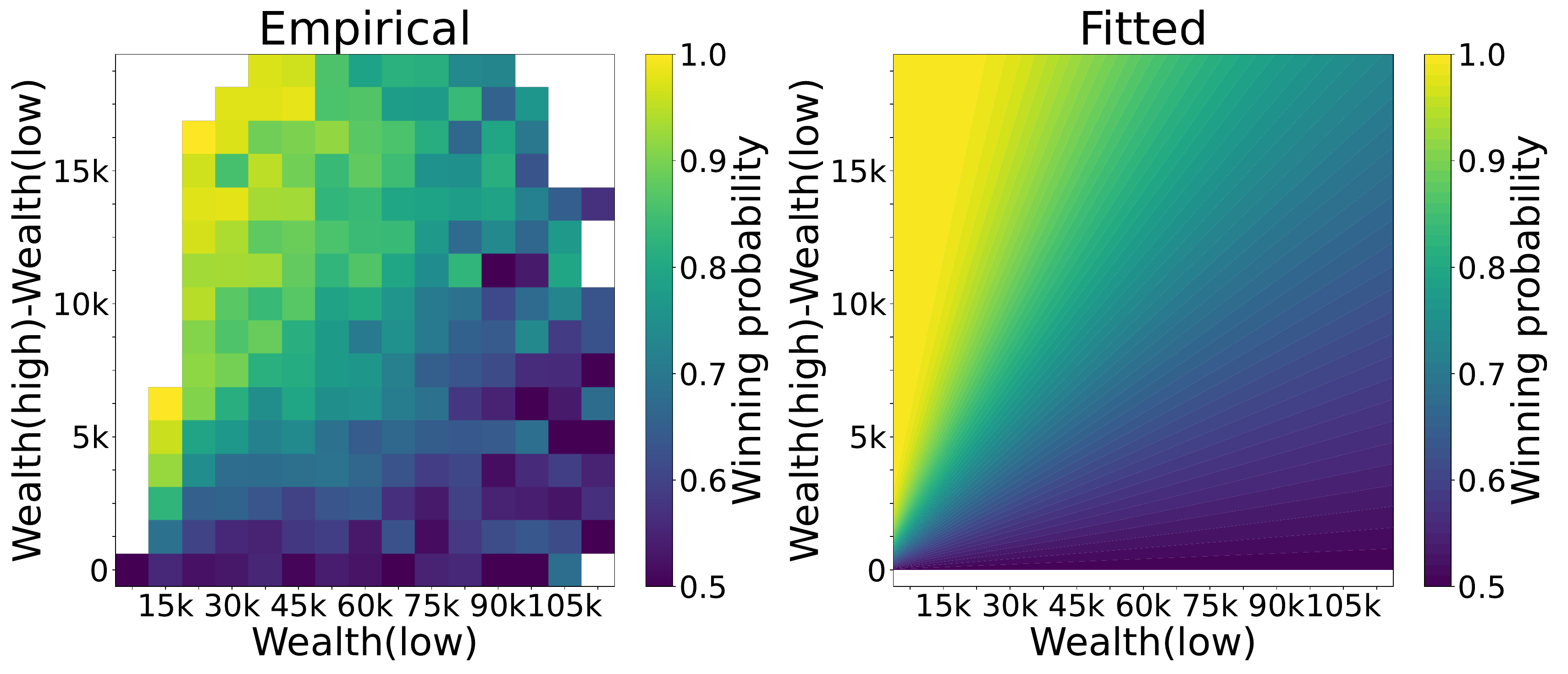}}
  \caption{\textbf{Winning probability of a teamfight when $\lambda=1$}: Left subfigures are the empirical winning frequency estimated from the real-world data. The right subfigures are the winning probability calculated from our model with fitting parameters.}
  \label{fig:winrate}
\end{figure}

\begin{figure}[t]
\centering
\begin{minipage}{0.325\linewidth}
    \centering
    \includegraphics[scale=0.127]{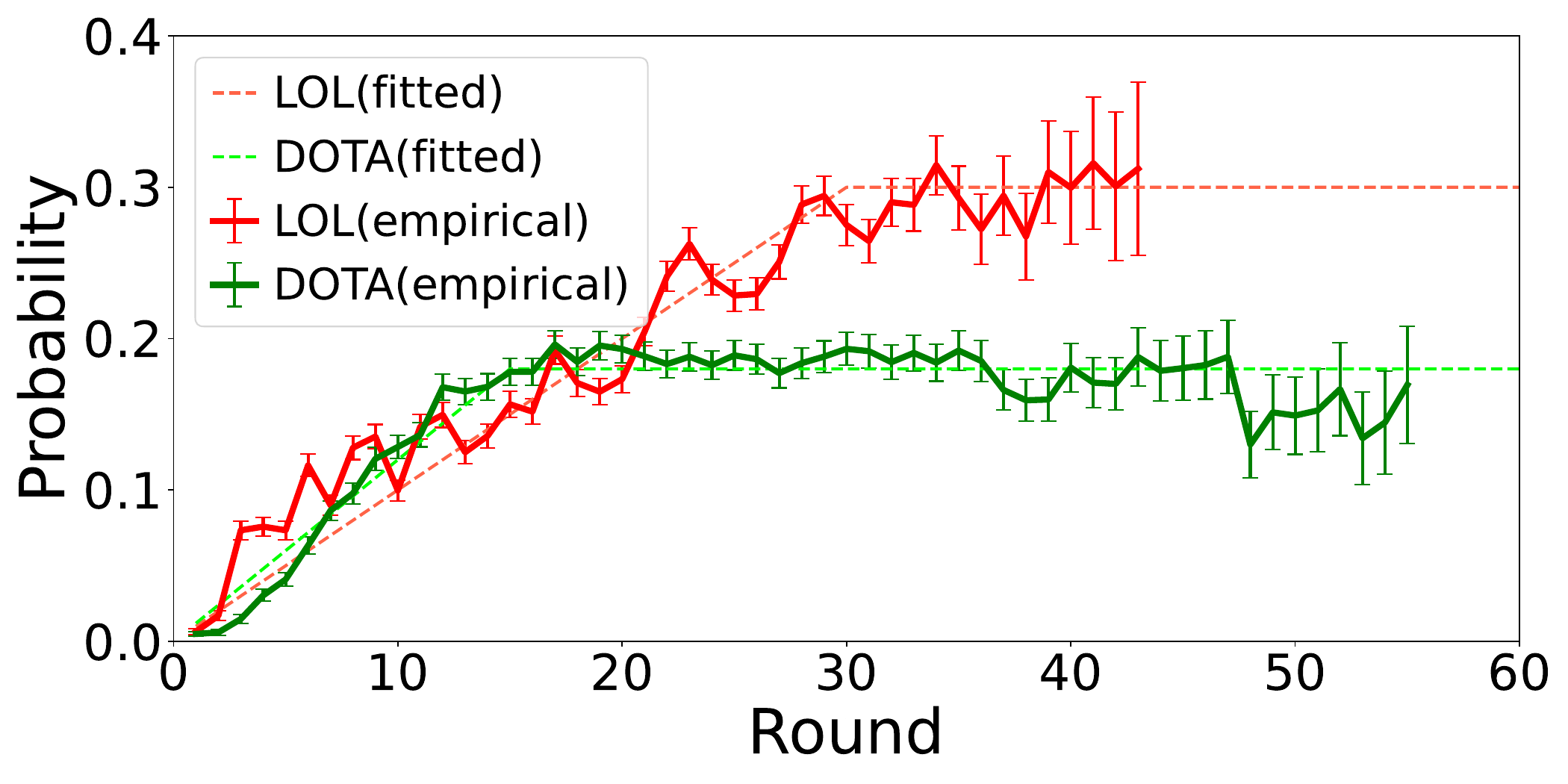}
    \caption{\textbf{Teamfight probability} $r(t)$ in each round}
    \label{fig:pfight}
    \end{minipage}
\hfill
    \begin{minipage}{0.66\linewidth}
    \centering
    \subfigure[LOL]{\includegraphics[scale=0.127]{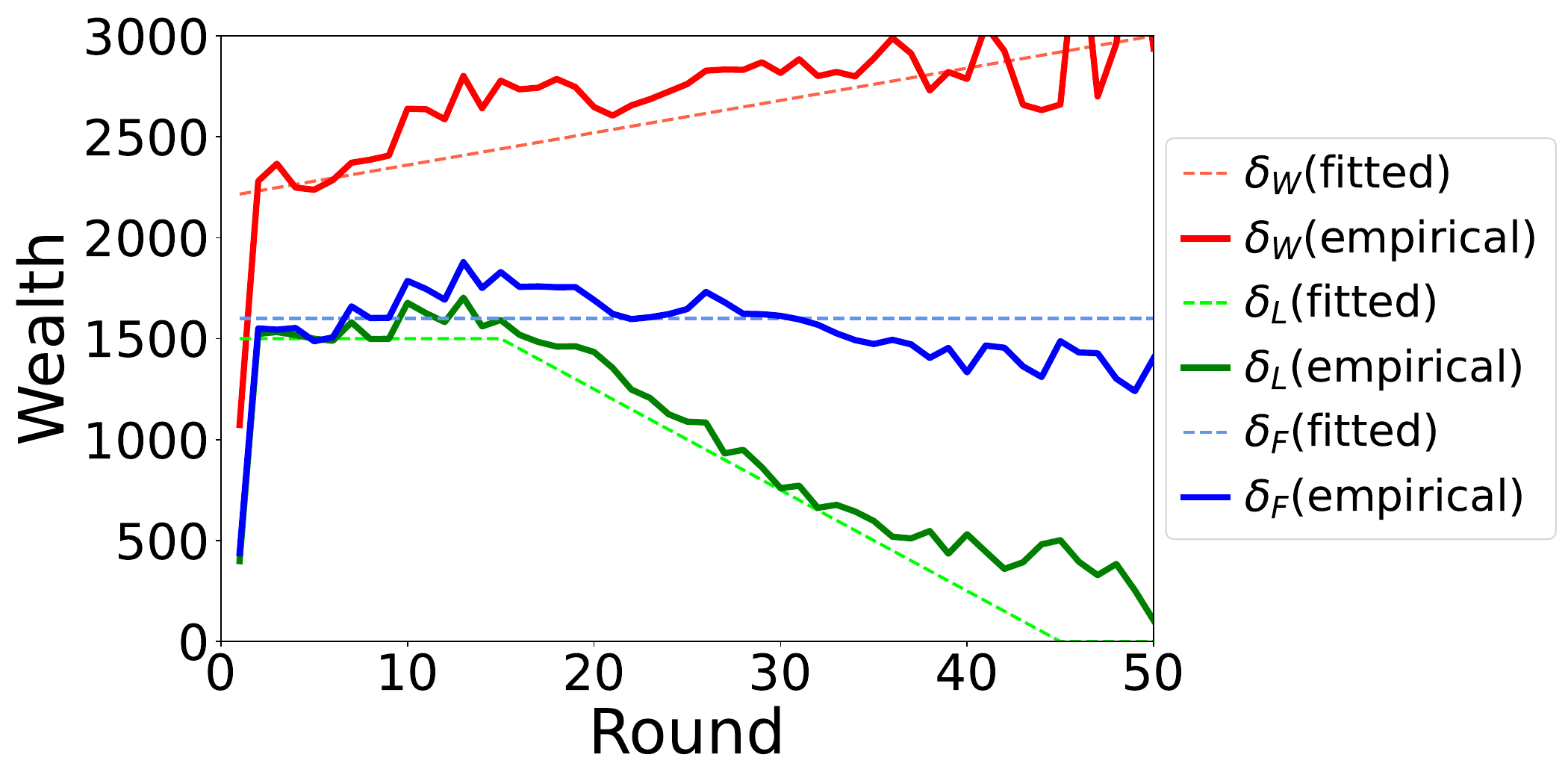}}
    \subfigure[DOTA 2]{\includegraphics[scale=0.127]{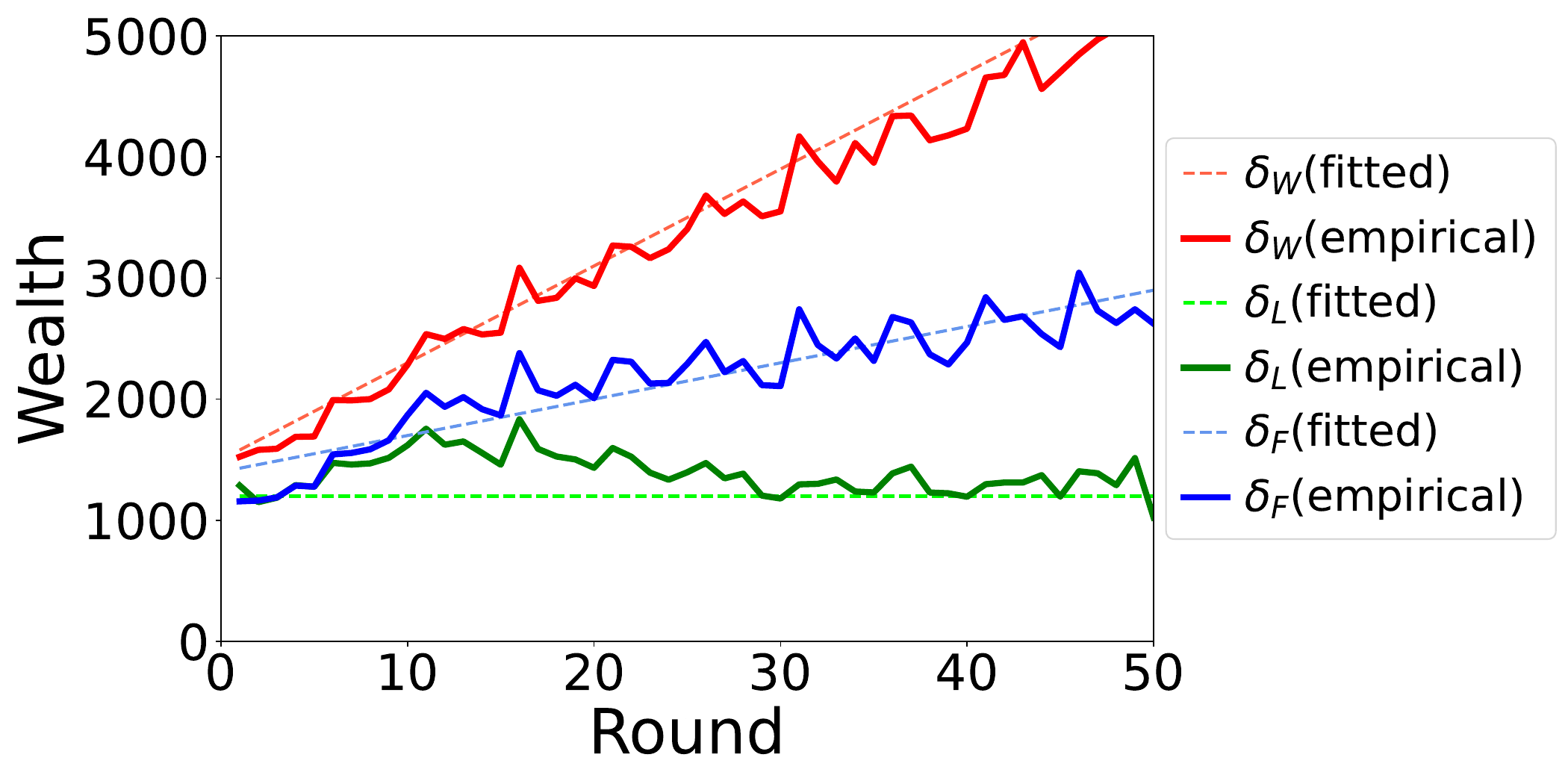}}
    \caption{\textbf{Average wealth income} in each types of round}
    \label{fig:wealth}
    \end{minipage}
\end{figure}

\paragraph{Wealth income per round $\deltaw_F(t),\deltaw_W(t),\deltaw_L(t)$}
The solid lines in \Cref{fig:wealth} show the average wealth gained in each round estimated from empirical data. The blue lines indicate the wealth income $\deltaw_F(t)$ in the farming round. The red lines and green lines represent the wealth income of the winner and loser $\deltaw_W(t),\deltaw_L(t)$ in the teamfight round correspondingly. In our numerical studies, we use piecewise linear functions to approximate them (the dashed lines). 

\paragraph{End probability $q(t)$}
The solid lines in \Cref{fig:pend} show the estimated probability that the a teamfight at round $t$ ends the game conditioning on this round is a teamfight round. The probability function increases regarding $t$. We use piecewise linear functions to approximate them (the dashed lines).

\begin{figure}[t]\centering
\begin{minipage}{0.325\linewidth}
    \centering
    \includegraphics[scale=0.127]{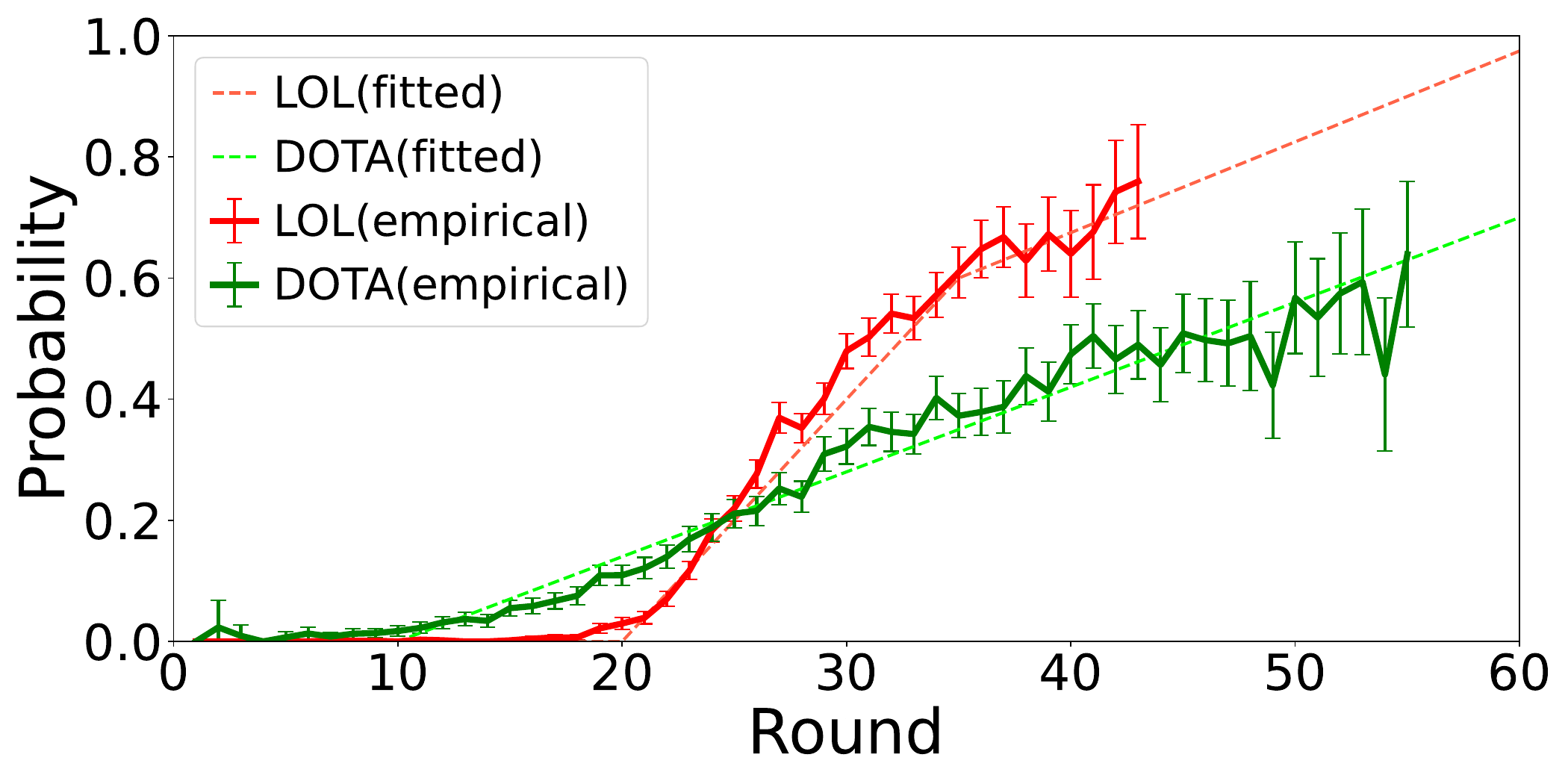}
    \caption{\textbf{Probability $q(t)$ of a teamfight ends the game} in each round $t$}
    \label{fig:pend}
\end{minipage}
\hfill
\begin{minipage}{0.66\linewidth}\centering
    \subfigure[LOL]{\includegraphics[scale=0.127]{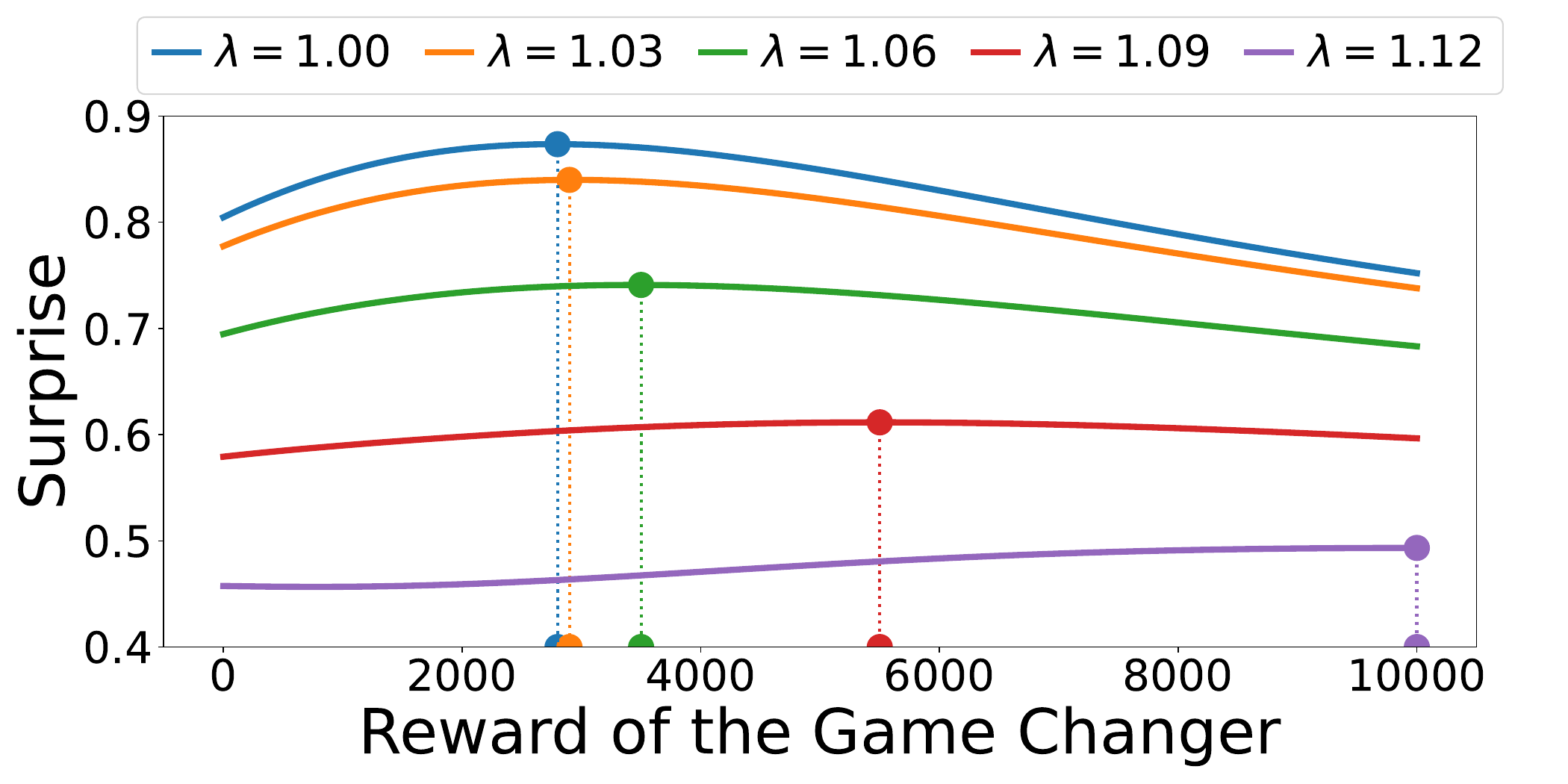}}
    \subfigure[DOTA 2]{\includegraphics[scale=0.127]{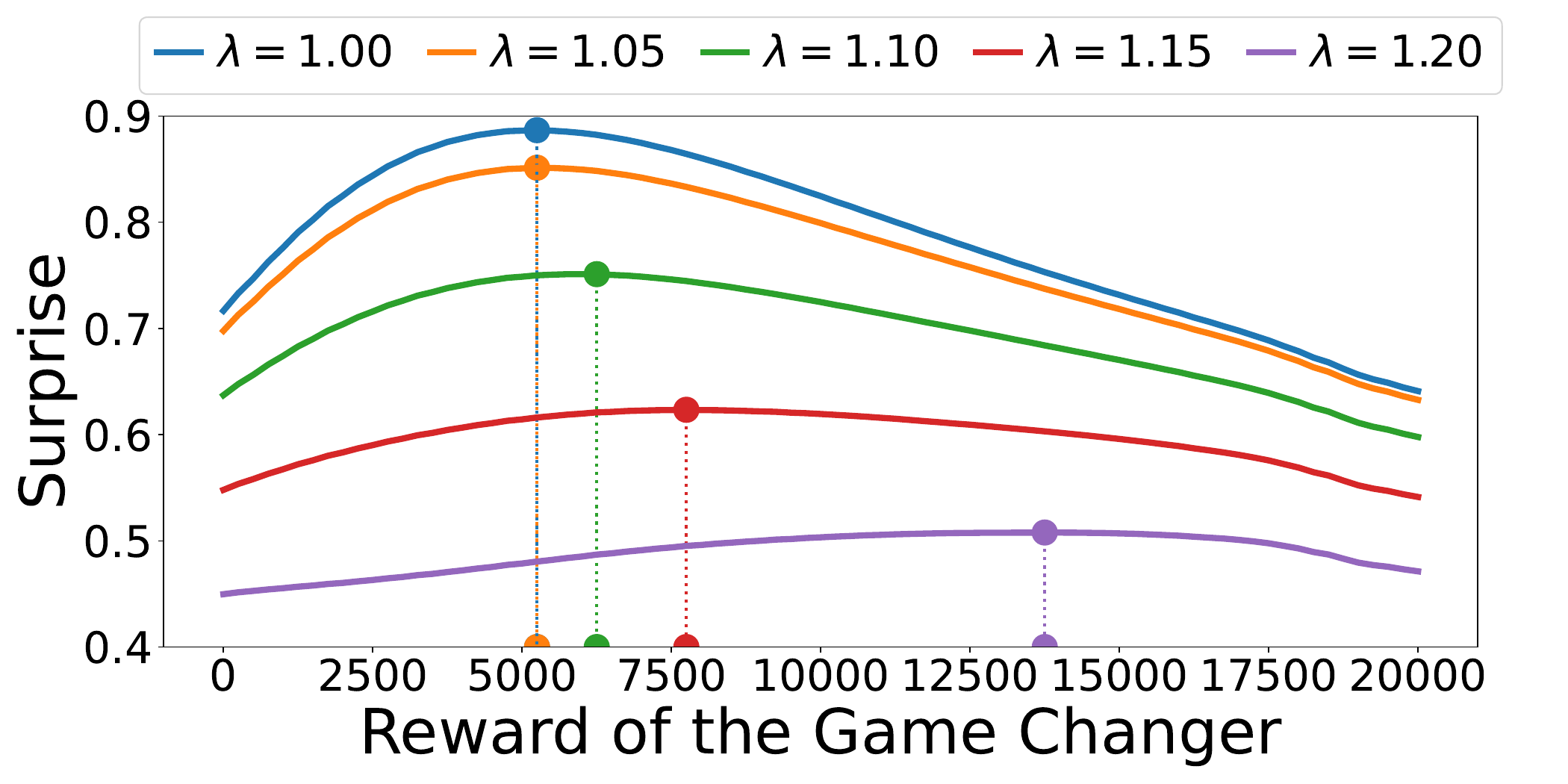}}
    \caption{\textbf{Relationship between the reward of the ``Game Changer'' $\deltaw_{GC}$ and surprise}: When the rating ratio $\lambda$ is larger, i.e., one team has more advantage in gaming skills, the optimal reward $\deltaw_{GC}^*$ is larger.}
    \label{fig:diff_delta}
\end{minipage}
\end{figure}

\subsection{Results of MOBA Games}

We numerically calculate the overall expected surprise given different $\deltaw_{GC}$ in the setting of different $\lambda$. \Cref{fig:diff_delta} shows the numerical results for the optimal reward of ``Game Changer'' in LOL and DOTA 2, which are marked by dots. We have the following observations.
\begin{itemize}
    \item \textbf{Symmetric $\lambda=1$:} the optimal reward is approximately $2800$ for LOL and $5200$ for DOTA 2.
    \item \textbf{General $\lambda\geq 1$:} when the rating ratio $\lambda$ increases, i.e., one team has more advantage in gaming skills, the optimal reward increases.
\end{itemize}

The observation in the general case matches the result for Quidditch. Unlike Quidditch, when two teams have equal ratings, the optimal reward is not $0$ in MOBA. One possible reason is that, the winning probability in MOBA is dynamic. Therefore, a lopsided situation can also happen between two equal rating teams due to advantages at the beginning rounds. We will discuss more in \Cref{sec:con}. 

\section{Conclusion and Discussion}
\label{sec:con}
We study how to design the reward of the ``Game Changer'' to maximize the audiences' expected surprise. We perform a theoretical study for Quidditch and a numerical studies for MOBA games. In the both cases, we find that when the match-up is more unbalanced, the ``Game Changer'' should have a higher reward to allow the underdog to make a comeback.

In MOBA, when the match-up is balanced, the optimal reward is not $0$. One reason is that Quidditch has a fixed winning probability in each round while MOBA has a dynamic winning probability of each teamfights. The team with an early advantage is likely to gain a larger advantage, which is similar to the well-known Matthew effect~\cite{merton1968matthew}. Therefore, a ``Game Changer'' is still necessary in this setting. This is more similar to the uniform prior setting~\cite{huang2021bonus} where one team's winning probability for a single round increases when the team wins more rounds.

One important direction is to extend the result to a more general model, even an unstructured game. In many games, the ``Game Changer's'' reward should be fixed. Here we can learn the diversity of the players first and employ our techniques but weigh over all possible scenarios to provide an optimal design. In some games, it is possible to set the ``Game Changer's'' reward dynamically. Thus, another direction is to study the dynamic setting (e.g. Mario Kart~\cite{website:mario}). Moreover, in our setting, the ``Game Changer'' does not favor any player, while in some settings (e.g. Project Winter~\cite{website:projectwinter}, Mario Kart~\cite{website:mario}), the weaker player can obtain better tools. It is also interesting to model and study those asymmetric settings. 

We do not consider the strategic behaviors of the players in our setting. However, in many games (e.g. Avalon~\cite{website:proavalon}, Mafia), the strategy players adopt also affects the overall surprise of the game. In those settings, a future direction is to combine reinforcement learning tools to learn the optimal strategy and derive a game design based on the strategy.

\clearpage
\bibliographystyle{unsrt}
\bibliography{ref}

\begin{thebibliography}{10}

\bibitem{website:invenglobal}
{Esports Live Streaming Effect on Ad Revenue.}
\newblock (2021, Sep 25) https://www.invenglobal.com/articles/15183/eSports-live-streaming-effect-on-ad-revenue/.

\bibitem{ely2015suspense}
Jeffrey Ely, Alexander Frankel, and Emir Kamenica.
\newblock Suspense and surprise.
\newblock {\em Journal of Political Economy}, 123(1):215--260, 2015.

\bibitem{ijcai2021-36}
Zhihuan Huang, Shengwei Xu, You Shan, Yuxuan Lu, Yuqing Kong, Tracy~Xiao Liu, and Grant Schoenebeck.
\newblock Surprise! and when to schedule it.
\newblock In Zhi-Hua Zhou, editor, {\em Proceedings of the Thirtieth International Joint Conference on Artificial Intelligence, {IJCAI-21}}, pages 252--260. International Joint Conferences on Artificial Intelligence Organization, 8 2021.
\newblock Main Track.

\bibitem{lanceleague}
Brian Lance, Wenjie Zhang, Jose Santiago-Calderon, Sean Bjurstrom, and Carlin Crisanti.
\newblock League of legends: Testing the effects of suspense, surprise, and friendship on satisfaction and entertainment.
\newblock 2014.

\bibitem{bizzozero2016importance}
Paolo Bizzozero, Raphael Flepp, and Egon Franck.
\newblock The importance of suspense and surprise in entertainment demand: Evidence from wimbledon.
\newblock {\em Journal of Economic Behavior \& Organization}, 130:47--63, 2016.

\bibitem{buraimo2020unscripted}
Babatunde Buraimo, David Forrest, Ian~G McHale, and JD~Tena.
\newblock Unscripted drama: Soccer audience response to suspense, surprise, and shock.
\newblock {\em Economic Inquiry}, 58(2):881--896, 2020.

\bibitem{scarf2019outcome}
Phil Scarf, Rishikesh Parma, and Ian McHale.
\newblock On outcome uncertainty and scoring rates in sport: The case of international rugby union.
\newblock {\em European Journal of Operational Research}, 273(2):721--730, 2019.

\bibitem{lucas2017goaalll}
Gale~M Lucas, Jonathan Gratch, Nikolaos Malandrakis, Evan Szablowski, Eli Fessler, and Jeffrey Nichols.
\newblock Goaalll!: Using sentiment in the world cup to explore theories of emotion.
\newblock {\em Image and Vision Computing}, 65:58--65, 2017.

\bibitem{huang2021bonus}
Zhihuan Huang, Yuqing Kong, Tracy~Xiao Liu, Grant Schoenebeck, and Shengwei Xu.
\newblock Bonus! maximizing surprise.
\newblock In {\em Proceedings of the ACM Web Conference 2022}, WWW '22, page 36–46, New York, NY, USA, 2022. Association for Computing Machinery.

\bibitem{brams2018making}
Steven~J Brams and Mehmet~S Ismail.
\newblock Making the rules of sports fairer.
\newblock {\em SIAM Review}, 60(1):181--202, 2018.

\bibitem{braverman2008mafia}
Mark Braverman, Omid Etesami, and Elchanan Mossel.
\newblock Mafia: A theoretical study of players and coalitions in a partial information environment.
\newblock {\em The Annals of Applied Probability}, 18(3):825--846, 2008.

\bibitem{yang2014identifying}
Pu~Yang, Brent~E Harrison, and David~L Roberts.
\newblock Identifying patterns in combat that are predictive of success in moba games.
\newblock In {\em FDG}, 2014.

\bibitem{rioult2014mining}
Fran{\c{c}}ois Rioult, Jean-Philippe M{\'e}tivier, Boris Helleu, Nicolas Scelles, and Christophe Durand.
\newblock Mining tracks of competitive video games.
\newblock {\em AASRI procedia}, 8:82--87, 2014.

\bibitem{drachen2014skill}
Anders Drachen, Matthew Yancey, John Maguire, Derrek Chu, Iris~Yuhui Wang, Tobias Mahlmann, Matthias Schubert, and Diego Klabajan.
\newblock Skill-based differences in spatio-temporal team behaviour in defence of the ancients 2 (dota 2).
\newblock In {\em 2014 IEEE Games Media Entertainment}, pages 1--8. IEEE, 2014.

\bibitem{mora2018moba}
Mar{\c{c}}al Mora-Cantallops and Miguel-{\'A}ngel Sicilia.
\newblock Moba games: A literature review.
\newblock {\em Entertainment computing}, 26:128--138, 2018.

\bibitem{serfozo2009basics}
Richard Serfozo.
\newblock {\em Basics of applied stochastic processes}.
\newblock Springer Science \& Business Media, 2009.

\bibitem{merton1968matthew}
Robert~K Merton.
\newblock The matthew effect in science: The reward and communication systems of science are considered.
\newblock {\em Science}, 159(3810):56--63, 1968.

\bibitem{website:mario}
{Mario Kart 8.}
\newblock (2023, May 11) https://mariokart8.nintendo.com/.

\bibitem{website:projectwinter}
{Project Winter}.
\newblock (2023, May 11) https://projectwinter.co/.

\bibitem{website:proavalon}
{The Resistance: A game of deception}.
\newblock (2023, May 11) https://www.proavalon.com/.

\end{thebibliography}
\clearpage
\appendix

\section{Numerical results on Surprise Curves}\label{sec:surp_curve}
We also provide numerical results illustrating how the overall surprise depends on the score in \Cref{fig:surp_curve}. We plot the continuous generalization of $\E[\Delta_\belset(x)]$, $\surp(x)$, and annotate the estimation $\tilde{x}$ and the simple estimation $\frac{1}{2q}\left(\frac{1-p}{p}-1\right)$. The overall surprise varies with $x$ and in some cases (\emph{e.g.} $p=0.2,q=0.1$), the optimal $x^*$ creates strictly higher surprise than the trivial settings ($x=0,\infty$). We also observe that when $x$ goes to infinity, the overall surprise becomes smaller since the whole game will reduce to who catches the Golden Snitch. 

\begin{figure}[!ht]\centering
  \subfigure[p=0.5,q=0.1]{\includegraphics[width=.3\linewidth]{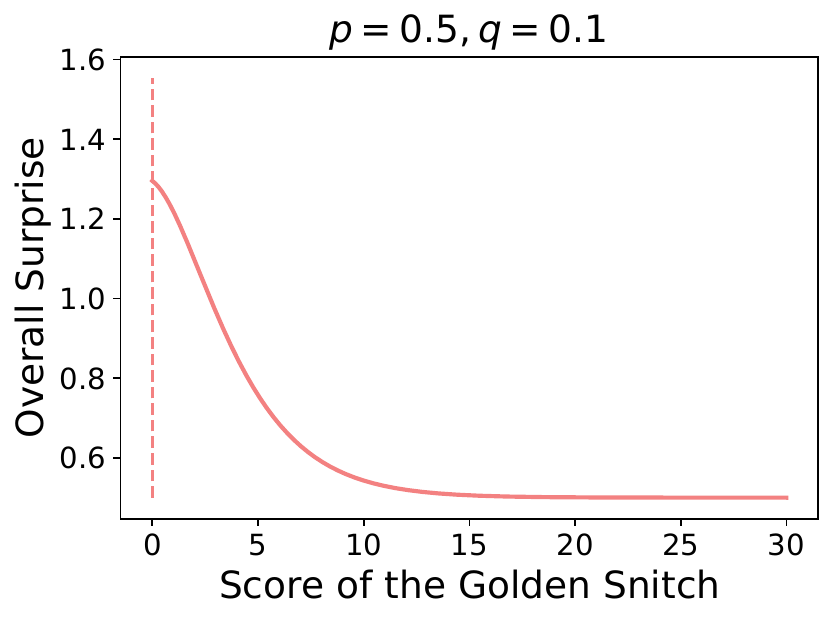}\label{fig:surp_curve_1_1}}
  \subfigure[p=0.3,q=0.1]{\includegraphics[width=.3\linewidth]{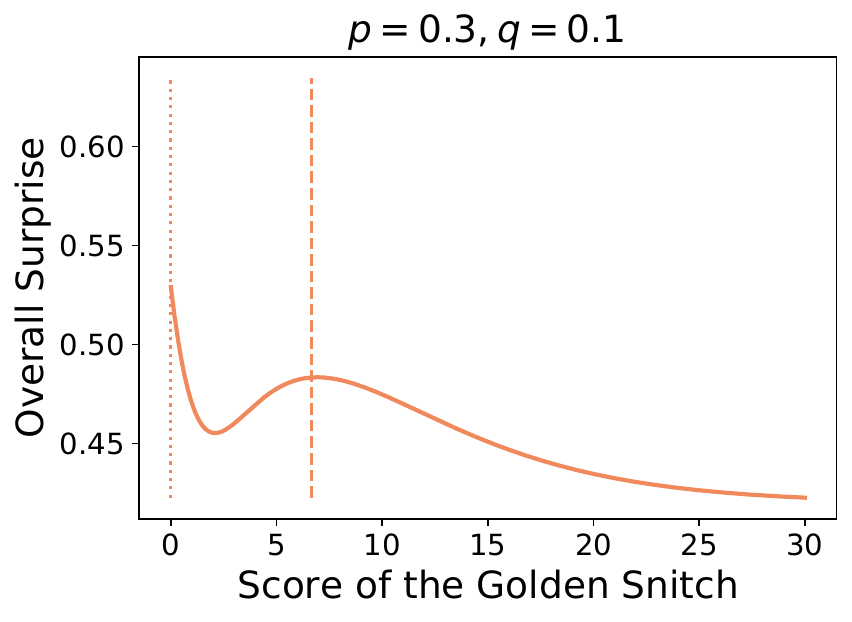}\label{fig:surp_curve_2_1}}
  \subfigure[p=0.2,q=0.1]{\includegraphics[width=.3\linewidth]{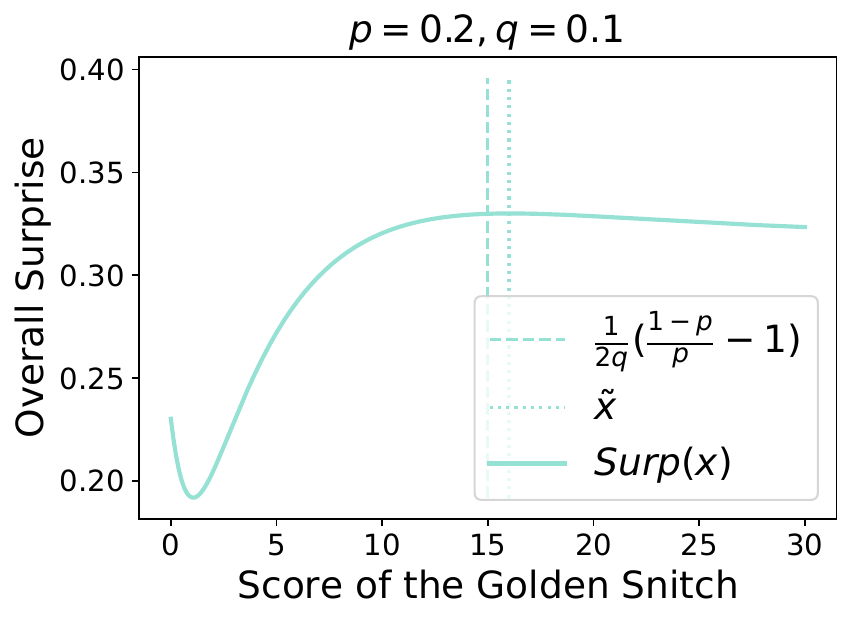}\label{fig:surp_curve_3_1}}
  \subfigure[p=0.5,q=0.2]{\includegraphics[width=.3\linewidth]{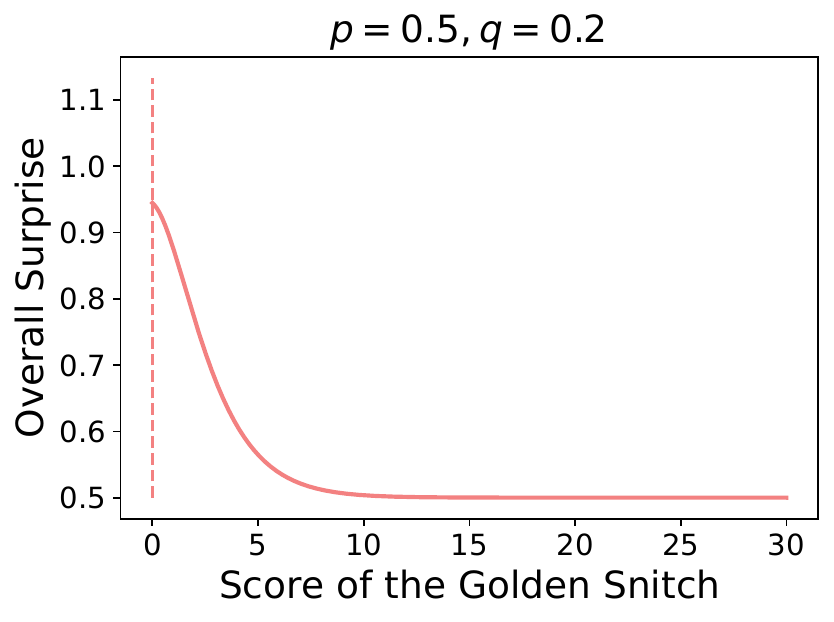}\label{fig:surp_curve_1_2}}
  \subfigure[p=0.3,q=0.2]{\includegraphics[width=.3\linewidth]{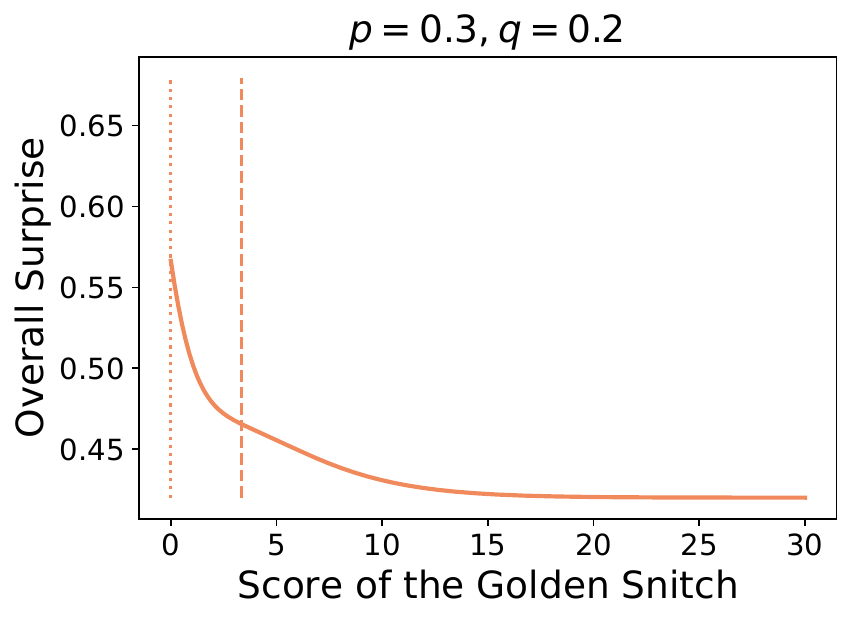}\label{fig:surp_curve_2_2}}
  \subfigure[p=0.2,q=0.2]{\includegraphics[width=.3\linewidth]{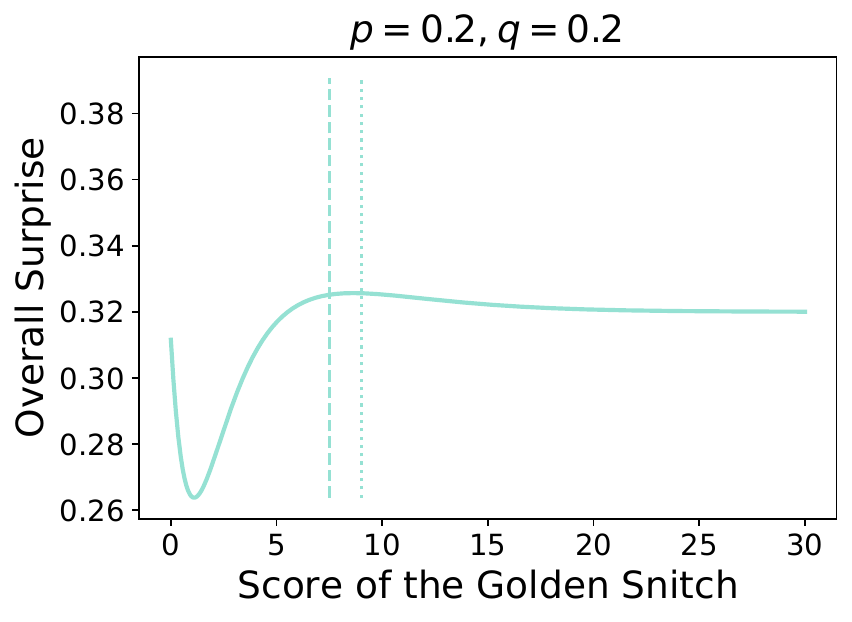}\label{fig:surp_curve_3_2}}
  \subfigure[p=0.5,q=0.3]{\includegraphics[width=.3\linewidth]{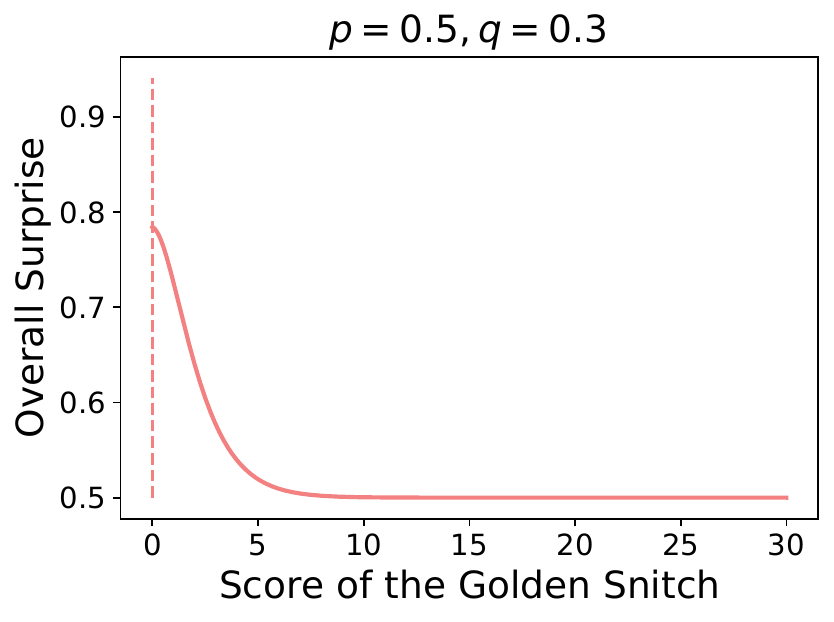}\label{fig:surp_curve_1_3}}
  \subfigure[p=0.3,q=0.3]{\includegraphics[width=.3\linewidth]{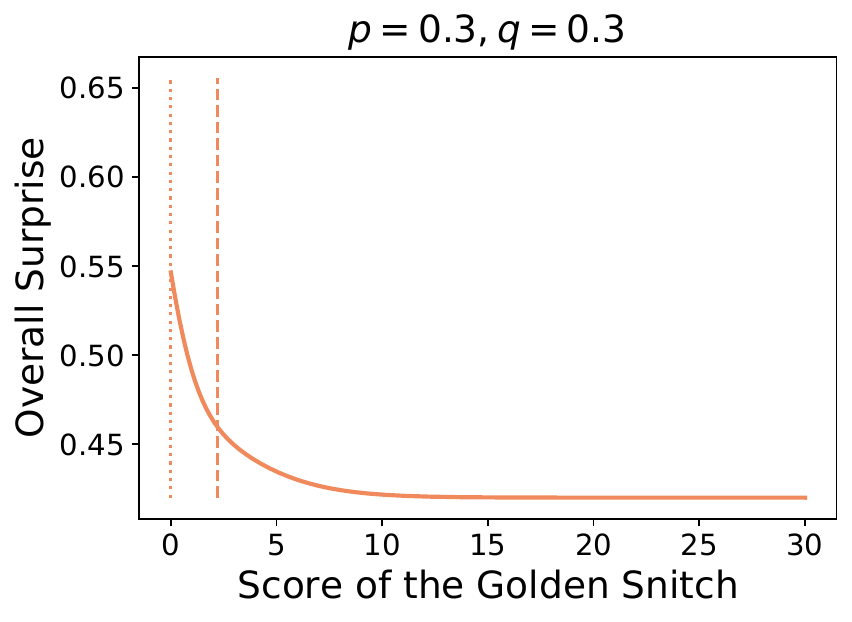}\label{fig:surp_curve_2_3}}
  \subfigure[p=0.2,q=0.3]{\includegraphics[width=.3\linewidth]{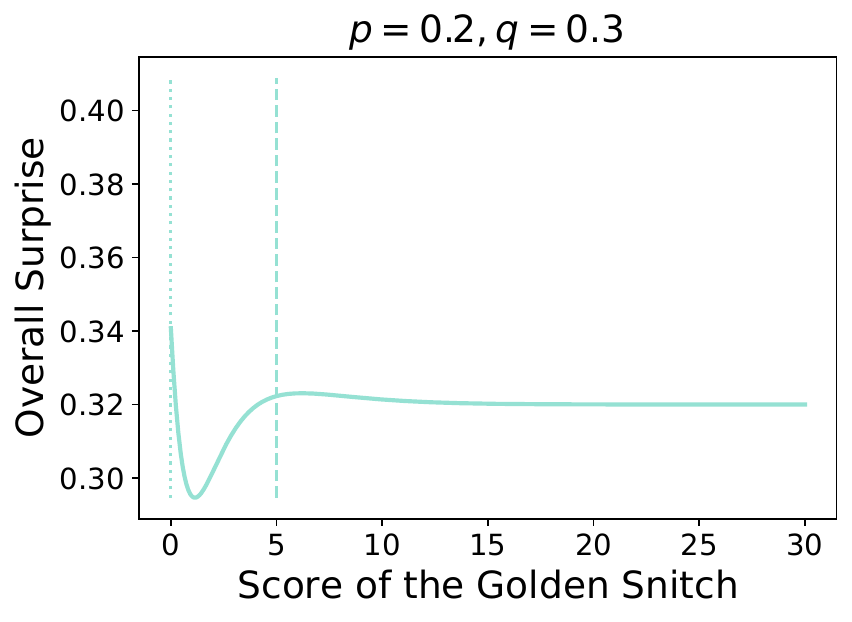}\label{fig:surp_curve_3_3}}
  \caption{\textbf{Relation between Golden Snitch's score and the expected overall surprise.}}
  \label{fig:surp_curve}
\end{figure}

\section{Data Processing}\label{sec:alg}
\subsection{Device}
We conducted our data collection process and numerical experiments on a Intel(R) Core(TM) i5-7267U CPU, with 2 cores and 8 GB RAM. The operating system is Windows 10 21H2. The Python version is 3.7.0. The G++ version is 4.8.1. All data do not contain personally identifiable information or offensive content. 

\subsection{Data Collection}
\paragraph{LOL} We use Games of Legends\footnote{https://gol.gg/esports/home/} as the data source of LOL. For each match, Games of Legends provides a detailed timeline. This timeline includes both teams' wealth changes per minute and all kills that occur in the match per second. The raw data does not include teamfight information. Therefore, we use the following steps to identify teamfights. We partition the kills into clusters. We construct the first cluster as the first $k_1$ kills where 1) for all $k<k+1\leq k_1$, the $k^{th}$ kill and the $k+1^{th}$ kill are within a time gap of less than 30 seconds; 2) the time gap between $k_1^{th}$ kill and the $k_1+1^{th}$ kill are above 30 seconds, and so forth. Kills may happen outside a teamfight (e.g. solo kill). Moreover, when the deaths of both sides are equal in one cluster, two sides will gain equal wealth. This corresponds to the farming round in our model. Therefore, we exclude the clusters where there is only 1 death or the deaths of both sides are equal. The remaining clusters are the set of teamfights in a match. The winner of a teamfight is the team with fewer deaths.

\paragraph{DOTA 2} The DOTA 2 data is collected from OpenDota\footnote{https://www.opendota.com/} via the OpenDota API. Opendota provides match ids for all professional matches. After getting the match ids, we get the data of each match from OpenDota. The data includes the total wealth per player per minute, the time of each teamfight and the number of deaths on both teams, and the time each Roshan was killed. Like LOL, we also exclude the teamfights where the deaths of both sides are equal. The winner of a teamfight is the team with fewer deaths.

In both settings, we label the end time of a teamfight as the time of the last kill in the teamfight. If a round contains the end of at least one teamfight, then this round will be labeled as a teamfight round. Otherwise it will be labeled as a farming round. The winner of a teamfight round is determined by the last teamfight in the round. 

Since we aim to calculate the optimal reward of the ``Game Changer'', we need to remove the influence of the original ``Game Changer'' (\emph{i.e.} Roshan, Baron Nashor), in the data. Therefore, when we calculate the income of each round, we need to subtract the reward brought by the ``Game Changer''. In LOL, Baron Nashor will give each player of the killing team 300 wealth, which is 1500 for the whole team. In DOTA 2, Roshan will reward the killing team 920 wealth in expectation. Besides the wealth reward, Roshan also drops several powerful consumable items on his death.\footnote{https://dota2.fandom.com/wiki/Roshan} We removed the income brought by these 
wealth and items rewards.

\section{Surprise Calculation}
Recall that in our model, the process of MOBA is a Markov chain where each state is described by the current time $t$, the current wealth of the two teams $w_A,w_B$ (at the beginning of the round), and the time that the last ``Game Changer'' is killed. Let $tran(s,t)$ for two states $s,t$ be the probability that state $s$ transfers to state $t$ in the Markov process.

We define $\mathrm{winp}$ of a state $s$ as the winning probability (of the whole game) of team A at state $s$. We define $\mathrm{sur}$ of a state $s$ as the expected surprise that will be generated from the process started at state $s$. We calculate them based on the recurrence relationship between the states.

By definition, we calculate $\mathrm{winp}$ and $\mathrm{sur}$ by
\begin{align*}
\mathrm{winp}(s) & =\sum_{t}\mathrm{tran}(s,t)\mathrm{winp(t)}\\
\mathrm{sur}(s)& =\sum_{t}\mathrm{tran}(s,t)\left(|\mathrm{winp}(s)-\mathrm{winp}(t)|+\mathrm{sur}(t)\right)
\end{align*}
Finally, we calculate the overall expected surprise by backward induction.

To reduce the computational complexity, we reduce the number of states by rounding the wealth to its nearby grid point. To reduce the error introduced by this rounding, when calculating the win rate and surprise for each state, we search forward more than one step. For example, when we calculate the win rate of state $s$ and search forward two steps, we use the following the formula.  

\begin{align*}
\mathrm{winp}(s) & =\sum_{t_1}\sum_{t_2}\mathrm{tran}(s,t_1)\mathrm{tran}(t_1,t_2)\mathrm{winp(t_2)}
\end{align*}

We search forward five steps to reduce the error.

\end{document}